\algnewcommand\algorithmicinput{\textbf{Input:}}
\algnewcommand\Input{\item[\algorithmicinput]}
\algnewcommand\algorithmicoutput{\textbf{Output:}}
\algnewcommand\Output{\item[\algorithmicoutput]}
\newtheorem{lemma}{Lemma}
\newtheorem{thm}{Theorem}
\newtheorem{remark}{Remark}
\newtheorem{prop}{Proposition}
\newtheorem{definition}{Definition}
\newtheorem{corollary}{Corollary}
\title{\LARGE \bf
Construction of the Sparsest Maximally $r$-Robust Graphs
}
\author{Haejoon Lee$^{1}$ and Dimitra Panagou$^{1,2}$
\thanks{*This work was supported by the Air Force Office of Scientific Research (AFOSR) under Award No. FA9550-23-1-0163.}
\thanks{*This work was partially sponsored by the Office of Naval Research (ONR), under grant number N00014-20-1-2395. The views and conclusions contained herein are those of the authors only and should not be interpreted as representing those of ONR, the U.S. Navy or the U.S. Government.}
\thanks{*Emails: $\{$haejoonl, dpanagou$\}$@umich.edu. $^{1}$Department of Robotics,
        University of Michigan, Ann Arbor, MI, USA. $^{2}$Department of Aerospace Engineering,
        University of Michigan, Ann Arbor, MI, USA}%
}
\begin{document}
\maketitle
\pagestyle{empty}

\begin{abstract} In recent years, the notion of $r$-robustness for the communication graph of the network has been introduced to address the challenge of achieving consensus in the presence of misbehaving agents. Higher $r$-robustness typically implies higher tolerance to malicious information towards achieving resilient consensus, but it also implies more edges for the communication graph. This in turn conflicts with the need to minimize communication due to limited resources in real-world applications (e.g., multi-robot networks). In this paper, our contributions are twofold. (a) We provide the necessary subgraph structures and tight lower bounds on the number of edges required for graphs with a given number of nodes to achieve maximum robustness. (b) We then use the results of (a) to introduce two classes of graphs that maintain maximum robustness with the least number of edges. Our work is validated through a series of simulations.
\end{abstract}

\section{Introduction}
Distributed multi-robot systems are deployed for various tasks such as information gathering \cite{viseras2018}, target tracking \cite{zhang2021}, and collaborative manipulation \cite{yanhao2021}. One way to achieve these different tasks is through the consensus algorithm, in which multiple robots achieve an agreement to common state values. However, consensus algorithm performance deteriorates significantly when one or more compromised robots share wrong, or even adversarial, information. 

To address this issue, there have been many works on providing resilience to consensus \cite{LeBlanc13,yan2022,zhang2012,leblanc2013alg,zhang2015,usevitch2020FiniteTime}. 
In \cite{LeBlanc13,zhang2012,zhang2015}, an algorithm called \textit{Weighted Mean-Subsequence-Reduced} (W-MSR) was introduced to allow the non-compromised (often called normal) agents to reach consensus despite the presence of compromised robots. Under the graph topological property called $\textit{r-robustness}$, the W-MSR algorithm guarantees that, under a certain level of robustness of the communication graph of the network, and for a given upper bounded number of compromised agents, normal agents can reach consensus within the convex hull of their initial values to successfully complete the given tasks despite the compromised agents.

By definition, a communication graph of the network needs more edges to achieve higher $r$-robustness. Nevertheless, given practical challenges such as limited communication range, bandwidth, and energy, it might be beneficial for the multi-robot network to minimize communications as much as possible. Therefore, we are interested in finding graphs of maximum robustness with the least number of edges. In other words, we are interested in finding the sparsest $r$-robust graph topologies with maximum robustness. In this paper, we introduce two classes of $r$-robust graphs that show such properties with a given number of nodes.

The maximum $r$-robustness for a given $n$-vertex network graph is $\lceil\frac n 2\rceil$ \cite{LeBlanc13}. One trivial example of a graph with the maximum robustness is a complete graph, but such graph uses the maximum number of edges possible. In fact, the relation between a graph's $r$-robustness and number of edges has not been explored in detail. A lower bound of the number of edges for undirected $(2,2)$-robust graphs is given in \cite{saldana2016}, but this bound does not hold for graphs of other robustness levels (i.e., for other than $(2,2)$-robust graphs). A preferential-attachment method is investigated in \cite{zhang2012,LeBlanc13, zhang2015} to systematically increase the size of a graph while maintaining its robustness. In \cite{zhang2012,LeBlanc13}, the minimum degree needed for any $r$- and $(r,s)$-robust graphs is presented. Nevertheless, these bounds are local in the sense that they only apply to the individual nodes. In this paper, we are interested in obtaining lower bounds on the number of edges of whole graphs with varying $r$-robustness.

 The authors in \cite{Guerrero17} study the construction of a class of undirected $r$-robust graphs that uses minimum number of nodes to achieve the maximum robustness. However, their construction method is not concerned with minimizing the usages of edges. Furthermore, several works study robustness of 2D lattice-based geometric graphs for systematic expansions of robotic networks \cite{Guerrero20, Guerrero19}. While these works expand the network graphs with predetermined robustness, their graphs do not minimize the number of edges. Conversely, our work constructs classes of graphs that specifically utilize the least number of edges for the maximum robustness.


Some approaches study how to maintain, or increase, the $r$-robustness of a graph by controlling its algebraic connectivity $\lambda_2$, i.e., the second smallest eigenvalue of a graph's Laplacian matrix \cite{cavorsi2022,saulnier2017}, using the bound $r\geq \lceil{\frac {\lambda_2} {2}}\rceil$ introduced in \cite{saulnier2017}. However, this approach mainly aims to increase the connectivity of the graph without considering its robustness directly. This leads to graphs that do not necessarily maintain the smallest number of edges as the robustness of the graph increases. In contrast, our work introduces fundamental structures the graphs should have for certain robustness and thus offers insights in exactly which edges are needed to increase its robustness.

\emph{Contributions:} The contributions of this paper are twofold. (a) We first present necessary subgraph structures and tight lower bounds on the number of edges graphs need to achieve maximum robustness. (b) We introduce two classes of graphs that use the least number of edges to attain maximum robustness. We also present simulations to verify their robustness and properties of being the sparsest.

\emph{Organization}: In Section~\ref{sec:prelim}, we go over the notations as well as fundamental concepts of the W-MSR algorithm and $r$-robustness. We provide necessary subgraph structures and lower bounds on the number of edges for $r$-robust graphs with a given number of nodes in Section~\ref{sec:lower}. In Section~\ref{sec:construction}, we introduce and provide systematic constructions of the graphs that use the least number of edges to maximize their robustness levels with given numbers of nodes. In Section~\ref{sec:sim}, we present our simulation results, and in Section~\ref{sec:concl}, we present our conclusions and outline avenues of future work.


\section{Preliminaries}
\label{sec:prelim}

\subsubsection{Notation and Basic Graph Theory}
We denote a simple, undirected time-invariant graph as $\mathcal G = (\mathcal {V},\mathcal{E})$ where $\mathcal {V}$ and $\mathcal{E}$ are the vertex set and the undirected edge set of the graph respectively. A undirected edge $(i,j)\in \mathcal{E}$ indicates that information is exchanged between the nodes $i$ and $j$. The neighbor set of agent $i$ is denoted as $\mathcal V_i^N =\{j\in \mathcal V \;|\; (i,j) \in \mathcal{E}\}$. The state of agent $i$ at time $t$ is denoted as $x_i[t]$. We denote the cardinality of a set $\mathcal S$ as $|\mathcal S|$. We denote the set of non-negative and positive integers as $\mathbb Z_{\geq 0}$ and $\mathbb Z_{>0}$.

\subsubsection{Fundamentals of W-MSR and $r$-Robustness}
The notions of W-MSR algorithm and $r$-robustness of a graph are defined in \cite{LeBlanc13}. In this section, we review some relevant fundamental concepts.

Let $\mathcal G = (\mathcal V, \mathcal E)$ be a graph. Then, a robot $i \in \mathcal V$ shares its state $x_i[t]$ at time $t$ with all neighbors $j \in \mathcal V_i^N$, and each robot $i$ updates its state according to the nominal update rule:
\begin{equation}
\label{eq:linear}
    x_i[t+1]=w_{ii}[t]x_i[t] + \sum_{j\in \mathcal V_i^N} w_{ij}[t]x_j[t],
\end{equation}
where $w_{ij}[t]$ is the weight assigned to agent $j$'s value by agent $i$, and where the following conditions are assumed for $w_{ij}[t]$ for $\forall i \in \mathcal V$ and $t \in \mathbb Z_{\geq0}$:
\begin{itemize}
    \item $w_{ij}[t]=0$ $\forall j\notin \mathcal V_i^N\cup \{i\}$,
    \item $w_{ij}[t]\geq \alpha$, $0\leq \alpha <1$ $\forall j\in \mathcal V_i^N\cup \{i\}$,
    \item $\sum_{j=1}^n w_{ij}[t]=1$
\end{itemize}
Through the protocol given by \eqref{eq:linear}, agents reach asymptotic consensus as long as the graph has a \textit{rooted-out branching} (i.e. there exists a node that has paths to all other nodes in the graph) \cite{mesbahi2010book}. However, this algorithm loses its consensus guarantee in the presence of misbehaving agents, whose formal definitions are given below:
\begin{definition}[\textbf{misbehaving agent}]
    \label{misbeh}
    An agent is \textbf{misbehaving} if it does not follow the nominal update protocol \eqref{eq:linear} at some time stamp $t$.
\end{definition}
There are numerous scopes of threat that describe the number of misbehaving agents in a network \cite{LeBlanc13}. The misbehaving agents are assumed to adopt one scope, and we discuss two of them below:
\begin{definition}[$\mathbf F$-\textbf{total}]
    \label{ftotal}
    A set $\mathcal S \subset \mathcal V$ is $\mathbf F$-\textbf{total} if it contains at most $F$ nodes in the graph (i.e. $|\mathcal S|\leq F)$.
\end{definition}

\begin{definition}[$\mathbf F$-\textbf{local}]
    \label{ftotal}
    A set $\mathcal S \subset \mathcal V$ is $\mathbf F$-\textbf{local} if all other nodes have at most $F$ nodes of $\mathcal S$ as their neighbors (i.e. $|\mathcal V_i^N\cap \mathcal S|\leq F$, $\forall i \in \mathcal V \setminus \mathcal S$).
\end{definition}

In response, algorithms on resilient consensus \cite{LeBlanc13, saldana2017,dibaji2017,yan2022} have become very popular. In particular, the W-MSR (Weighted-Mean Subsequent Reduced) algorithm \cite{LeBlanc13} with the parameter $F$ guarantees normal agents to achieve asymptotic consensus on their state values with either $F$-total or $F$-local misbehaving agents under certain assumed topological properties of the communication graph. We review these properties and other relevant concepts below:
\begin{definition}[$\mathbf r$-\textbf{reachable}]
    \label{reachability}
    Let $\mathcal G = (\mathcal V,\mathcal{E})$ be a graph and $S$ be a nonempty subset of $\mathcal V$. The subset $S$ is $\mathbf r$-\textbf{reachable} if $\exists i\in S$ such that $|\mathcal V^N_i \backslash S|\geq r$, where $r\in \mathbb Z_{\geq 0}$.
\end{definition}

\begin{definition}[$\mathbf r$-\textbf{robust}]
    \label{robustness}
    A graph $\mathcal G = (\mathcal V,\mathcal{E})$ is $\mathbf r$-\textbf{robust} if $\forall S_1,S_2 \subset \mathcal V$ where $S_1\cap S_2 = \emptyset$ and $S_1,S_2\neq \emptyset$, at least one of them is $r$-reachable.
\end{definition}

The W-MSR algorithm and $r$-robustness are closely related. The W-MSR algorithm allows normal agents in a network to reach consensus on a value within convex hull of their initial values, and $r$-robustness dictates the number of misbehaving agents the algorithm can tolerate while still having a consensus guarantee. A network being $(2F+1)$-robust is a sufficient condition for its normal agents to reach consensus through the W-MSR algorithm in the presence of $F$-local or $F$-total misbehaving agents \cite{LeBlanc13}.

\begin{figure*}[ht]
\centering
\includegraphics[width=\textwidth]{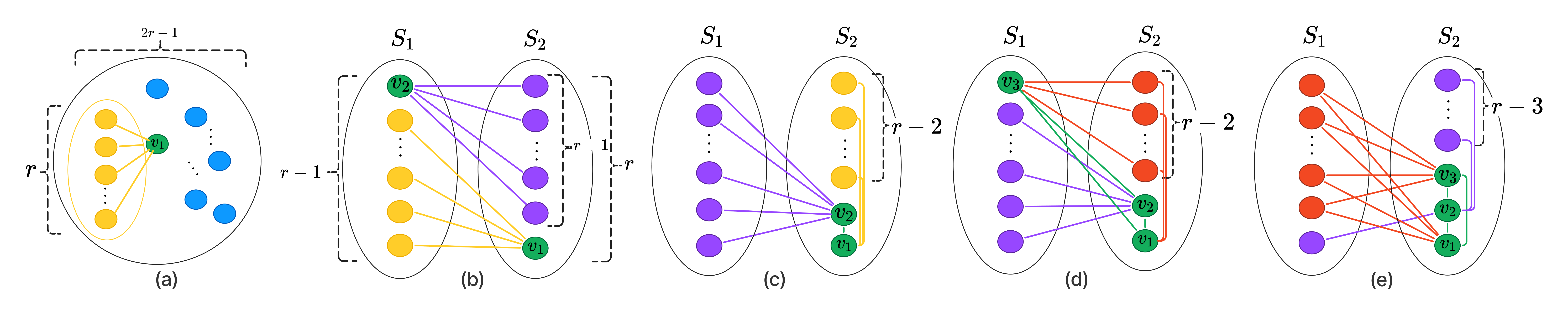}
  \caption{This shows the snapshots of the initial constructions of $S_1$ and $S_2$ from start of the first to end of the second step in the proof of Lemma~\ref{lem:complete}. Figure (a) shows that any node $v_1$ (colored in green) will have $r$ neighbors (colored in yellow). Note that Figure (b) indicates $|S_1|=r-1$ and $|S_2|=r$, which are fixed for the figures (c), (d), and (e). Figure (b) shows the start of the first step where a node $v_2 \in S_1$ is connected to all the $r-1$ purple nodes and $v_1$ in $S_2$. The green nodes $v_1$ and $v_2$ together form a $2$-clique. Figure (c) shows $S_1$ and $S_2$ of the end of the first step after we swap $r-1$ nodes including $v_2$ in $S_1$ with $r-1$ purple nodes in $S_2\setminus\{v_1\}$. Then, the start of the second step is shown in Figure (d), where another node $v_3 \in S_1$ has edges with $r-2$ red nodes and $v_1,v_2$ in $S_2$. Figure (e) shows $S_1$ and $S_2$ after we swap $r-2$ nodes in $S_1$ including $v_3$ with $r-2$ red nodes in $S_2\setminus\{v_1,v_2\}$. At the end of second step, we have a $3$-clique (colored in green in Figure (e)). This process continues until $(r+1)$-clique is formed.}
  \label{fig:complete_graph_proof}
\end{figure*}

\section{Lower Bound of Number of Edges for $r$-robust Graphs}
\label{sec:lower}
Our goal is to find classes of graphs that have the maximum $r$-robustness with the least number of edges. Since the maximum robustness any graph of $n$ nodes can achieve is $\lceil\frac n 2\rceil$, we consider graphs with either $2r-1$ (for odd values of $n$) or $2r$ (for even values of  $n$) nodes. Therefore in this section, we aim to find the tight lower bounds of number of edges that $r$-robust graphs need to satisfy, for both cases of $2r-1$ and of $2r$ nodes. These are formally established in Theorem~\ref{thm:min_addition} and Theorem~\ref{thm:min_addition2}, respectively. We first introduce the concept of clique:
\begin{definition}[$\textbf {clique}$ \cite{clique}]
    A $\textbf {clique}$ $C$ of a graph $\mathcal G = (\mathcal V, \mathcal{E})$ is a subset of $\mathcal V$ whose elements are adjacent to each other in $\mathcal G$.
\end{definition}
In general, $k$-clique refers to a clique size of $k$ nodes. The notion of a clique plays a vital role in building the structures of $r$-robust graphs, easing the difficulty in proving the lower bounds on the number of edges in a general graph setting. Now, we show the maximum clique any $r$-robust graphs with $2r-1$ nodes should contain in Lemma~\ref{lem:complete}.

\subsection{Case 1: $r$-Robust Graph with $2r-1$ Nodes}
\begin{figure*}[h]
\centering
\includegraphics[width=\textwidth]{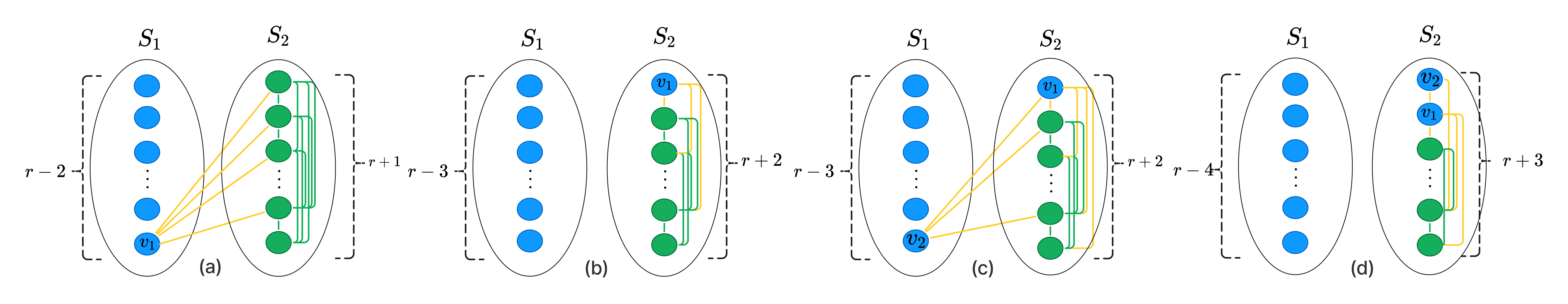}
  \caption{This presents the snapshots of initial developments of $S_1$ and $S_2$ from the start of the first step to the end of the second step in the proof for Theorem~\ref{thm:min_addition}. The first step starts with $|S_1|=r-2$ and $|S_2|=r+1$, where $S_2$ contains a $(r+1)$-clique from Lemma~\ref{lem:complete} whose nodes are colored in green. Then, since $S_1$ is $r$-reachable, a node $v_1 \in S_1$ must have edges (colored in yellow) with any $r$ nodes in $S_2$. This requires $r$ additional edges. However, once $v_1 \in S_1$ is moved to $S_2$ (shown Figure (b)), at second step, it has a node $v_2$ that has edges (colored in yellow) with any $r$ nodes in $S_2$. This is because $S_1$ is $r$-reachable even without $v_1$. This requires $r$ additional edges again, as shown on Figure (c). Lastly, the second step ends after $v_2 \in S_1$ is moved to $S_2$, which is shown on Figure (d). After two steps, we needed $2r$ additional edges. This process continues until $S_1$ becomes empty.}
  \label{fig:thm1}
  \vspace{-3mm}
\end{figure*}

\begin{lemma}
\label{lem:complete}
    Let $\mathcal G = (\mathcal V, \mathcal{E})$ be an $r$-robust graph with $|\mathcal V|=2r-1$. Then $\mathcal V$ must contain a $(r+1)$-clique.
\end{lemma}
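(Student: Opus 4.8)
The plan is to distill the lemma into one ``domination'' fact proved by a single natural choice of partition, and then to build the clique greedily from it.

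\emph{Step 1 (key claim).} I would first establish: for every $T\subseteq\mathcal V$ with $|T|=r$, there exists $v\in\mathcal V\setminus T$ adjacent to every vertex of $T$. To prove this, apply Definition~\ref{robustness} to the partition $S_1=\mathcal V\setminus T$, $S_2=T$, noting $|S_1|=(2r-1)-r=r-1$ and $|S_2|=r$, both nonempty. For any $i\in S_2$ we have $\mathcal V_i^N\setminus S_2\subseteq S_1$, so $|\mathcal V_i^N\setminus S_2|\le r-1<r$; hence $S_2$ is not $r$-reachable, so $S_1$ must be. Thus some $v\in S_1$ satisfies $|\mathcal V_v^N\setminus S_1|\ge r$, and since $\mathcal V_v^N\setminus S_1\subseteq S_2$ with $|S_2|=r$, this forces $\mathcal V_v^N\supseteq S_2=T$; moreover $v\notin T$. (This is the ``whole-graph'' strengthening of the familiar fact that every vertex of such a graph has degree at least $r$.)

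\emph{Step 2 (greedy clique growth).} I would then show by induction on $k$, for $1\le k\le r+1$, that $\mathcal G$ contains a $k$-clique. The base case $k=1$ is any single vertex. For the step, let $K$ be a $k$-clique with $k\le r$. Since $|\mathcal V\setminus K|=(2r-1)-k\ge r-k$, pick any $W\subseteq\mathcal V\setminus K$ with $|W|=r-k$ and set $T=K\cup W$, so $|T|=r$. Step 1 yields $v\in\mathcal V\setminus T$ adjacent to all of $T$, hence to all of $K$; as $v\notin K$, the set $K\cup\{v\}$ is a $(k+1)$-clique. Taking $k=r+1$ gives the required $(r+1)$-clique.

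\emph{Main obstacle.} The only substantive idea is Step 1: realizing that cutting the $2r-1$ vertices into a ``short'' block of size $r-1$ — which therefore cannot be $r$-reachable — and a ``long'' block of size $r$ compels a vertex adjacent to the entire long block. Step 2 is then just iterated application of Step 1, each time re-choosing $T$ so that the current clique lies inside it; this is precisely the ``swap'' bookkeeping depicted in Fig.~\ref{fig:complete_graph_proof}, but reading it as a fresh partition at each stage removes the need for any separate inductive maintenance. The residual checks — the inequality $2r-1-k\ge r-k$ and the tacit hypothesis $r\ge 2$ (a single vertex is vacuously $1$-robust yet contains no $2$-clique) — are immediate.
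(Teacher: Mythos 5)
Your proof is correct and is essentially the paper's argument: both rest on the observation that in a partition of the $2r-1$ vertices into blocks of sizes $r-1$ and $r$, the small block cannot be $r$-reachable, so some vertex of it must dominate the entire size-$r$ block, and both grow the clique by repeatedly re-choosing the partition so that the current clique sits inside the size-$r$ block. Your Step 1 simply packages this more cleanly than the paper's explicit node-swapping, and your remark that the statement tacitly requires $r\geq 2$ is a valid (minor) caveat that applies to the paper's proof as well.
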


\begin{proof}
    In this proof, we will show our argument by constructing two subsets $S_1$ and $S_2$ of $\mathcal V$ such that the $r$-robust graph $\mathcal G$ contains a $(r+1)$-clique. Since $\mathcal G$ is $r$-robust, it holds that for any pair of disjoint, non-empty subsets $S_1$, $S_2$ of $\mathcal V$, at least one of them is $r$-reachable. Hence, every node $v_i\in \mathcal V$ has a degree of $r$. WLOG, let $v_1$ be a node as shown on Fig.~\ref{fig:complete_graph_proof} (a). Then we construct $S_1$ to be the set of $r-1$ neighbors of $v_1$ and $S_2$ to be the set of the remaining $r$ nodes including $v_1$ such that $|S_1|=r-1$, $|S_2|=r$, and $S_1\cap S_2=\emptyset$. This enforces $S_1$ to be $r$-reachable.

    Since $S_1$ is $r$-reachable, there exists a node $v_2 \in S_1$ that has edges with all $r$ nodes in $S_2$. Then in the first step, $v_2$ has an edge with $v_1 \in S_2$, forming a $2$-clique as shown on Fig.~\ref{fig:complete_graph_proof} (b). We then swap $r-1$ nodes in $S_1$ including $v_2$ with $r-1$ nodes in $S_2\setminus\{v_1\}$ (colored in purple in Fig.~\ref{fig:complete_graph_proof}). Then in the second step, since $S_1$ is $r$-reachable even after $v_2$ is swapped out, it has a node $v_3 \in S_1$ that has edges with $r$ nodes in $S_2$. By construction, since $v_1$ and $v_2$ are among the $r$ nodes in $S_2$, $v_3$ forms a $3$-clique with $v_1$ and $v_2$, as shown on Fig.~\ref{fig:complete_graph_proof} (d). We then swap $r-2$ nodes in $S_1$ including $v_3$ with $r-2$ nodes in $S_2\setminus \{v_1,v_2\}$ (colored in red in Fig.~\ref{fig:complete_graph_proof}). Again, in the third step, since $S_1$ is $r$-reachable even after $v_3$ is swapped out, it has a node $v_4$ that has edges with all $r$ nodes in $S_2$. Since $v_1,v_2,v_3 \in S_2$ are three of the 
    $r$ nodes to have edges with $v_4$, the nodes $v_1,v_2,v_3,v_4$ form a $4$-clique. We then swap $r-3$ nodes in $S_1$ including $v_4$ with $r-3$ nodes in $S_2\setminus \{v_1,v_2,v_3\}$, and thus the construction continues.
    
    Likewise, in the $k^{\text{th}}$ step, a node $v_{k+1}$ (that has edges with $r$ nodes in $S_2$ in the $k^{\text{th}}$ step) forms a $(k+1)$-clique with $v_1,v_2,\cdots, v_{k-1},v_k \in S_2$. We then swap $r-k$ nodes in $S_1$ including $v_{k+1}$ with $r-k$ nodes in $S_2\setminus \{v_1,\cdots,v_k\}$. At the end of $r-1^{\text{th}}$ step, $S_2$ contains $v_1,\cdots,v_r$ that form an $r$-clique. Finally, since $S_1$ is $r$-reachable, it contains a node $v_{r+1}$ that has edges with all nodes $v_1,\cdots, v_r \in S_2$. Then, $\{v_{r+1}\}\cup S_2$ forms a ($r+1)$-clique.
\end{proof}

\begin{thm}
 \label{thm:min_addition}
    Let $\mathcal G = (\mathcal V, \mathcal{E})$ be an $r$-robust graph with 
$|\mathcal V|= 2r-1$. Then, 
\begin{equation}
        \label{eq:bound1}
        |\mathcal{E}|\geq \frac {3r(r-1)} {2}.
    \end{equation}
\end{thm}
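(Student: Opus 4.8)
The plan is to bootstrap from Lemma~\ref{lem:complete}. That lemma guarantees an $(r+1)$-clique $K\subseteq\mathcal V$, which by itself contributes $\binom{r+1}{2}=\frac{r(r+1)}{2}$ edges. Since $\frac{r(r+1)}{2}+r(r-2)=\frac{3r(r-1)}{2}$, it then suffices to exhibit an additional $r(r-2)$ edges that are guaranteed to be distinct from each other and from the edges internal to $K$. I would obtain these through a sweep over the $r-2$ vertices of $\mathcal V\setminus K$, exactly along the lines sketched in Fig.~\ref{fig:thm1}.

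Concretely, initialize $S_1=\mathcal V\setminus K$ and $S_2=K$, so that $S_1,S_2$ partition $\mathcal V$ with $|S_1|=r-2$ and $|S_2|=r+1$. The key observation is that throughout the procedure $|S_1|\leq r-2<r$, so no vertex of $S_2$ can have $r$ neighbors outside $S_2$ (all such neighbors would lie in $S_1$); hence $S_2$ is never $r$-reachable, and $r$-robustness forces $S_1$ to be $r$-reachable at every stage. At step $i$ (for $i=1,\dots,r-2$), $r$-reachability of $S_1$ produces a vertex $v_i\in S_1$ with at least $r$ neighbors in the current $S_2$; I record those $\geq r$ edges and then move $v_i$ from $S_1$ into $S_2$ before continuing. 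After $r-2$ steps $S_1$ is empty and at least $r(r-2)$ edges have been recorded.

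The crux is the disjointness bookkeeping. That none of the recorded edges lies inside $K$ is immediate, since every edge recorded at step $i$ is incident to $v_i\notin K$. For no edge being counted twice: at step $i$ every recorded edge joins $v_i$ to a vertex of the then-current $S_2=K\cup\{v_1,\dots,v_{i-1}\}$; because the $v_i$ are pairwise distinct and $v_j\notin K\cup\{v_1,\dots,v_{i-1}\}$ whenever $j\geq i$, an edge recorded at step $i$ cannot be recorded again at any later step. Adding the $\binom{r+1}{2}$ clique edges to the $\geq r(r-2)$ recorded edges yields \eqref{eq:bound1}. I expect the only delicate points to be this disjointness argument and the degenerate small cases $r\leq 2$ (where the bound collapses to statements about $K_1$ and $K_3$, both handled directly); the real structural work is done by Lemma~\ref{lem:complete} together with the repeated use of the fact that the growing clique side $S_2$ can never be $r$-reachable.
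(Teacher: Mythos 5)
Your proposal is correct and follows essentially the same route as the paper: invoke Lemma~\ref{lem:complete} to get the $(r+1)$-clique, then sweep the remaining $r-2$ vertices from $S_1$ into $S_2$, charging $r$ new edges per step. Your version is in fact slightly more careful than the paper's, since you spell out why $S_2$ can never be $r$-reachable and why the recorded edges are pairwise distinct and disjoint from the clique edges, points the paper leaves implicit.
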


\begin{proof}
 Since $\mathcal G$ is $r$-robust, it holds that for any pair of disjoint, non-empty subsets $S_1$, $S_2$ of $\mathcal V$, at least one of them is $r$-reachable.
 From Lemma~\ref{lem:complete}, we know that $\mathcal V$ contains a $(r+1)$-clique $C$. Let $C' =  \mathcal V\setminus C$. Then, $|C|= r+1$ and $|C'|=r-2$. Since $C \in \mathcal V$, we initially have $|\mathcal E| \geq \frac {r(r+1)} 2$.
 
  WLOG, let $S_1=C'$ and $S_2=C$, where $|S_1|=r-2$ and $|S_2|=r+1$. This implies $S_1$ is $r$-reachable. Then in the first step, since $S_1$ is $r$-reachable, it contains one node $v_1$ that has edges with $r$ nodes in $S_2$. This pair of $S_1$ and $S_2$ is visualized in Fig.~\ref{fig:thm1} (a). Note that all nodes in $C$ are colored in green in Fig.~\ref{fig:thm1}. Since $v_1$ has $r$ edges, it requires $r$ additional edges (i.e. $|\mathcal E| \geq \frac {r(r+1)} 2+r$). We then move $v_1$ from $S_1$ to $S_2$, as shown on Fig.~\ref{fig:thm1} (b). In the second step, since $S_1$ is $r$-reachable, it contains a node $v_2$ that has edges with at least $r$ nodes in $S_2$ (shown on Fig.~\ref{fig:thm1} (c)). This again requires at least $r$ additional edges (i.e. $|\mathcal E| \geq \frac {r(r+1)} 2+2r$). We then move $v_2$ from $S_1$ to $S_2$, as shown in Fig.~\ref{fig:thm1} (d).
    
    We can continue this process of (1) drawing edges between a node $v_i \in S_1$ and any $r$ nodes in $S_2$ and (2) putting $v_i$ into $S_2$ in the $i^{\text{th}}$ step until $S_1$ becomes empty. Then, we can do it for $r-2$ times, as we initially have $|S_1|=r-2$. In other words, we have $r-2$ different pairs of $S_1$ and $S_2$, and each pair requires $r$ new edges. Since we started from the setting where $S_2=C$ and forced $|\mathcal E|$ to increase at least a total of $r(r-1)$, $|\mathcal E| \geq \frac {r(r+1) + 2r(r-2)} 2 \Rightarrow |\mathcal E| \geq \frac {3r(r-1)} 2$. 
\end{proof}

From Theorem~\ref{thm:min_addition}, we can get the following corollary:
\begin{corollary}
 \label{cor:lowerbound}
    Let $\mathcal G = (\mathcal V, \mathcal{E})$ be an $r$-robust graph. Then, \begin{equation}
        \label{eq:bound1}
        |\mathcal{E}|\geq \frac {3r(r-1)} {2}.
    \end{equation}
\end{corollary}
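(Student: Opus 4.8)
The plan is to reduce the general statement to the two ``extremal-order'' cases already isolated in this section, and to dispatch everything else by a degree count. First note that an $r$-robust graph must have $n:=|\mathcal V|\ge 2r-1$, since otherwise $\lceil n/2\rceil<r$ contradicts the maximum-robustness bound of \cite{LeBlanc13}. Hence it suffices to prove $|\mathcal E|\ge\tfrac{3r(r-1)}{2}$ for every $n\ge 2r-1$, and I would split on the value of $n$.

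For $n=2r-1$ the inequality is exactly Theorem~\ref{thm:min_addition}, and for $n=2r$ it follows from the even-node counterpart Theorem~\ref{thm:min_addition2}, whose lower bound is at least $\tfrac{3r(r-1)}{2}$ (the $2r$-node case requires at least as many edges as the $(2r-1)$-node case). For all larger $n$ I would invoke the minimum-degree property of $r$-robust graphs, already used inside the proof of Lemma~\ref{lem:complete}: every vertex $v$ has $\deg(v)\ge r$, because taking $S_1=\{v\}$ forces $S_1$ to be $r$-reachable, since $S_2=\mathcal V\setminus\{v\}$ cannot contain a vertex with $r$ neighbours inside the singleton $S_1$. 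Summing, $|\mathcal E|=\tfrac12\sum_{v\in\mathcal V}\deg(v)\ge\tfrac{nr}{2}$, which is already at least $\tfrac{3r(r-1)}{2}$ as soon as $n\ge 3(r-1)$.

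The step I expect to be the real obstacle is closing the residual window $2r+1\le n\le 3r-4$, which is empty for $r\le 4$ but genuinely nonempty once $r\ge 5$, and where the bare degree bound falls just short of $\tfrac{3r(r-1)}{2}$. To cover it I would try to recycle the swapping argument of Theorem~\ref{thm:min_addition}: delete a carefully chosen $(r-1)$-set $A$ of vertices, rerun the ``pick an $r$-reachable $S_1$, move a vertex to $S_2$, repeat'' procedure to certify at least $r(r-1)$ distinct edges meeting $A$, and then separately lower-bound the edges inside $B:=\mathcal V\setminus A$, using the fact that deleting $k$ vertices from an $r$-robust graph leaves an $(r-k)$-robust graph (so $\mathcal G[B]$ is at least connected, and deleting fewer vertices retains more robustness, which opens an induction on $r$). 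The delicate part is the bookkeeping: a crude count of the edges meeting $A$ together with those inside $B$ merely reproduces the $\tfrac{nr}{2}$ estimate, so one must extract genuinely more structure inside $B$ — forcing about $\tfrac{r(r-1)}{2}$ edges there — to push the two contributions up to the desired $\tfrac{3r(r-1)}{2}$.
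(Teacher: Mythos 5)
Your reduction is sound where it applies: $n\ge 2r-1$ is forced, the cases $n=2r-1$ and $n=2r$ follow from Theorem~\ref{thm:min_addition} and Theorem~\ref{thm:min_addition2} (and indeed $\lfloor\tfrac{r(3r-2)+2}{2}\rfloor\ge\tfrac{3r(r-1)}{2}$), and the minimum-degree count $|\mathcal E|\ge \tfrac{nr}{2}$ is correct and closes everything with $n\ge 3r-3$. But the window $2r+1\le n\le 3r-4$, nonempty for $r\ge 5$, is a genuine gap in your argument, and you have only a plan for it, not a proof. The plan as sketched does not obviously work: the swapping procedure of Theorem~\ref{thm:min_addition} is powered by Lemma~\ref{lem:complete}, which guarantees an $(r+1)$-clique only when $n=2r-1$; already at $n=2r$ the guaranteed clique shrinks to roughly $\tfrac{r}{2}+2$ (Lemma~\ref{lem:complete2}), so for larger $n$ you cannot assume the dense seed $S_2$ that makes the edge-accounting come out to $\tfrac{3r(r-1)}{2}$ rather than to the weaker $\tfrac{nr}{2}$. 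You correctly flag that the crude count merely reproduces the degree bound; that is exactly the unfinished step, so the proposal is incomplete as it stands.

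For comparison, the paper dispatches all $n$ at once by citing (i) the fact that $2r-1$ is the minimum order of an $r$-robust graph and (ii) Theorem~\ref{thm:min_addition}, glued together by the assertion that the minimum edge count of an $r$-robust graph is non-decreasing in the number of nodes. That monotonicity claim is stated without proof, and it is not immediate: deleting a vertex from an $r$-robust graph is only guaranteed to leave an $(r-1)$-robust graph, so there is no obvious surgery taking an $n$-node witness to an $(n-1)$-node witness with no more edges. In other words, the obstacle you honestly isolate in the middle range of $n$ is essentially the same difficulty the paper's one-line argument passes over; your route is more explicit and rigorous in the regimes it covers, but neither your proposal (as written) nor a fully literal reading of the paper's proof closes that range without an additional argument.
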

\begin{proof}
In general, the more nodes a graph has, the
more edges it must have to maintain a certain robustness level. Since (i) $2r-1$ nodes is the least number of nodes a graph has to have to be $r$-robust \cite{Guerrero17} and (ii) a graph of $2r-1$ nodes needs at least $\frac {3r(r-1)} {2}$ edges to be $r$-robust from Theorem~\ref{thm:min_addition}, any $r$-robust graph $\mathcal G$ needs to have at least $\frac {3r(r-1)} {2}$ edges, regardless of its number of nodes. 
\end{proof}

 Note that Corollary~\ref{cor:lowerbound} illustrates that the lower bound $ |\mathcal{E}|\geq \frac {3r(r-1)} {2}$ from Theorem~\ref{thm:min_addition} is in fact a necessary condition any $r$-robust graphs have to satisfy regardless of its number of nodes. This is important, because this guarantees that graphs with edges less than $\frac {3r(r-1)} 2$ cannot be $r$-robust.

\subsection{Case 2: $r$-Robust Graph with $2r$ Nodes}

We now study the lower bound of edges for $r$-robust graphs with $2r$ nodes.

\begin{figure*}[h]
\centering
\includegraphics[width=\textwidth]{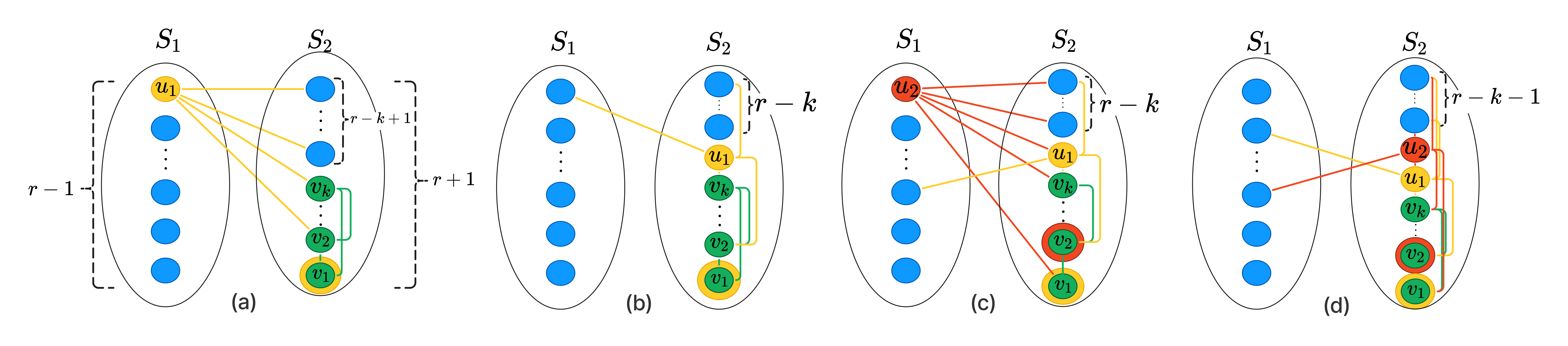}
  \caption{This presents the snapshots of two stages of inductive developments of $S_1$ and $S_2$ in the proof for Lemma~\ref{lem:complete2}. Figure (a) indicates $|S_1|=r-1$ and $|S_2|=r+1$, which are fixed, and thus omitted in the other figures. 
  In Figure (a), $S_2$ contains $r+1$ nodes including $k$-clique $C_{k,1}=\{v_1,\cdots,v_k\}$ which are colored in green, while $S_1$ contains the remaining $r-1$ nodes. In the first stage, since $S_1$ is $r$-reachable, a node $u_1 \in S_1$ has edges with at least $r$ nodes in $S_2$. Figure (a) shows the third case scenario where the yellow node $u_1$ does not have an edge with $v_1$, which is highlighted in yellow. Then, $u_1$ forms a $k$-clique $C_{k,2}$ with $C_{k,1}\setminus\{v_1\}$. That implies that if any node $u_i \in S_1$, $i\neq 1$, has edges with at least $r$ nodes in $S_2$ but not with $v_1$ in any future steps, $u_i$ will form a $(k+1)$-clique with $C_{k,2}$. We then swap $u_1$ with a node in $S_2\setminus C_{k,1}$, as shown in Figure (b). Then in the second stage, since $S_1$ is $r$-reachable, a node $u_2 \in S_2$ (colored in red in the figure) has edges with at least $r$ nodes in $S_2$. Figure (c) presents the third case scenario where $u_2$ only has $r$ edges and does not have an edge with $v_2$, which is highlighted in red. Similar to before, if any $u_i \in S_2$, $i\neq 2$, in future steps has edges with $r$ nodes except for $v_2$ again, that forms a $(k+1)$-clique. Then, we swap $u_2$ with a node in $S_2\setminus U'$ where $U'=C_{k,1}\cup C_{k,2}$, as shown on Figure (d). Still, $S_1$ is $r$-reachable, and thus the process continues until a $\big(\lfloor\frac {r+3}{2}\rfloor\big)$-clique is formed.}
  \label{fig:complete2}
  \vspace{-4mm}
\end{figure*}
\begin{lemma}
\label{lem:complete2}
    Let $\mathcal G = (\mathcal V, \mathcal{E})$ be an $r$-robust graph with $|\mathcal V| =2r$. Then $\mathcal V$ must contain a $\big(\lfloor\frac {r+4}{2}\rfloor\big)$-clique.
\end{lemma}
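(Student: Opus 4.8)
The plan is to prove this by induction on the clique size, in the spirit of Lemma~\ref{lem:complete}, but with an extra ingredient: for $|\mathcal V| = 2r$ a single $r$-reachable vertex no longer extends a clique by one, so one must harvest several such vertices and pull out the next clique vertex by a pigeonhole argument.

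The structural engine is the following. Whenever $\mathcal V = S_1 \cup S_2$ is a disjoint partition with $|S_1| = r-1$ and $|S_2| = r+1$, the set $S_2$ cannot be $r$-reachable, since that would require some vertex of $S_2$ to have at least $r$ neighbors in $S_1$ while $|S_1| = r-1$; hence $r$-robustness forces $S_1$ to be $r$-reachable, i.e. some $u \in S_1$ has at least $r$ of its neighbors in $S_2$ and therefore misses at most one vertex of $S_2$. This is the only place $r$-robustness enters, and it applies to \emph{every} such split.

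Now suppose inductively that $\mathcal V$ contains a $k$-clique $C_{k,1}=\{v_1,\dots,v_k\}$; the base case $k=2$ is just the existence of an edge, which follows directly from $r$-robustness (a vertex of degree less than $r$ would form a non-$r$-reachable singleton whose complement also cannot be $r$-reachable). Put $C_{k,1}$ inside $S_2$, pad $S_2$ with $r+1-k$ additional ``fresh'' vertices, and let $S_1$ be the remaining $r-1$ vertices. Repeat the following stage: extract an $r$-reachable witness $u \in S_1$. If $u$ is adjacent to all of $C_{k,1}$ --- which is automatic unless $u$ misses exactly one vertex of $C_{k,1}$ --- then $\{u\}\cup C_{k,1}$ is a $(k+1)$-clique and we stop. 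Otherwise $u$ misses exactly one vertex $v_{i(u)} \in C_{k,1}$; we record the label $i(u)$, note that $u$ is adjacent to all of $C_{k,1}\setminus\{v_{i(u)}\}$ and to every witness already recorded (those now sit in $S_2$), swap $u$ into $S_2$ in place of a fresh vertex, and continue. The crucial observation is that as soon as two recorded witnesses $u,u'$ share a label $i$, the set $\{u,u'\}\cup(C_{k,1}\setminus\{v_i\})$ is a $(k+1)$-clique: it has $2+(k-1)=k+1$ vertices, $u$ and $u'$ are adjacent to each other and to $C_{k,1}\setminus\{v_i\}$, and $C_{k,1}\setminus\{v_i\}$ is a clique. (In Fig.~\ref{fig:complete2} this bookkeeping appears as the auxiliary cliques $C_{k,2}$ and the growing forbidden set $U'$.) Each stage consumes one fresh vertex of $S_2$, so the process runs for roughly $r+1-k$ stages plus a final examination; since all labels lie in $\{1,\dots,k\}$, once enough witnesses accumulate a repeated label is forced, and unwinding the arithmetic pushes the clique size up to the bound claimed in the lemma.

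The main obstacle is making this count land exactly on $\lfloor\frac{r+4}{2}\rfloor$. The pigeonhole closes comfortably while $2k$ is safely below the number of harvestable witnesses, but the very top of the induction is tight: it can happen that the recorded witnesses realize \emph{each} label in $\{1,\dots,k\}$ exactly once, with no repeat, just as the fresh pool of $S_2$ is exhausted, so that no $(k+1)$-clique has yet been produced. Resolving this borderline case --- whether by squeezing out one more witness through a further re-split (swapping a witness back out of $S_2$, as suggested by the swap with ``a node in $S_2\setminus U'$'' in Fig.~\ref{fig:complete2}(d)), by arguing that this bijection-type local structure cannot embed into an $r$-robust graph, or by treating the parities of $r$ separately --- is the delicate core of the proof; everything else is routine bookkeeping of the partition.
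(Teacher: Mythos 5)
Your structural engine and inductive step essentially reproduce the paper's argument: the same $(r-1,r+1)$ split forcing $S_1$ to be $r$-reachable, the same harvest-and-swap of witnesses each missing at most one vertex of $S_2$, and your pigeonhole on the label of the missed vertex of $C_{k,1}$ is a cleaner repackaging of the paper's bookkeeping with the intersections $U_n=\bigcap_j C_{k,j}$ and unions $U_n'=\bigcup_j C_{k,j}$. That part is sound, and, as you implicitly compute, it carries the induction exactly as far as $k\leq\lfloor\frac{r+1}{2}\rfloor$, i.e.\ it produces a clique of size $\lfloor\frac{r+3}{2}\rfloor$. For odd $r$ this coincides with $\lfloor\frac{r+4}{2}\rfloor$ and you are done.

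For even $r$, however, you are short by one: $\lfloor\frac{r+3}{2}\rfloor=\frac{r+2}{2}<\frac{r+4}{2}$. You correctly identify this borderline case, but you do not resolve it --- you only list three candidate strategies without executing any of them, and this is precisely where the paper has to do genuinely new work, so the gap is real. In the tight configuration (case (iii) occurring $\frac{r}{2}$ times with all labels distinct), the paper observes that $S_2$ then contains two disjoint $\frac{r}{2}$-cliques, the surviving half $\{v_1,\dots,v_{r/2}\}$ of the original clique and the accumulated witnesses $\{u_1,\dots,u_{r/2}\}$; one more harvesting step upgrades these to two \emph{disjoint} $\big(\frac{r+2}{2}\big)$-cliques $P'$ and $C$. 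The key new move is then to abandon the $(r-1,r+1)$ split and re-partition $\mathcal V$ into two equal halves $|S_1|=|S_2|=r$ with $P'\subset S_1$ and $C\subset S_2$: with equal halves, an $r$-reachable witness in either side must be adjacent to \emph{all} $r$ vertices of the other side, hence to all of $C$ (or all of $P'$), which finally yields the $\frac{r+4}{2}$-clique. Without this balanced re-split, or some equivalent closing argument, your proof establishes only a $\lfloor\frac{r+3}{2}\rfloor$-clique, which is strictly weaker than the lemma whenever $r$ is even.
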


\begin{proof}
In this proof, we will inductively show our argument by constructing two subsets $S_1$ and $S_2$ of $\mathcal V$ such that $\mathcal G$ is $r$-robust and contains a $c$-clique, $2 \leq c \leq \lfloor\frac {r+4}{2}\rfloor$. Since $\mathcal G$ is $r$-robust, it holds that for any pair of disjoint, non-empty subsets $S_1$, $S_2$ of $\mathcal V$, at least one of them is $r$-reachable. Hence we have the base case of a $2$-clique.

For the induction step, we fix $|S_1|=r-1$ and $|S_2|=r+1$. This enforces $S_1$ to be $r$-reachable. WLOG, let $S_2$ contain $r+1$ nodes including $k$-clique $C_{k,1}=\{v_1,\cdots,v_k\}$ (green in Fig.~\ref{fig:complete2}), where $k\geq 2$, and let $S_1$ contain the remaining $r-1$ nodes. We also denote $U_1 = U_1' = C_{k,1}$. Since $S_1$ is $r$-reachable, it has a node $u_1$ (yellow in Fig.~\ref{fig:complete2}) that has edges with at least $r$ nodes in $S_2$. There are three cases: $u_1$ has edges with (i) all $r+1$ nodes in $S_2$, (ii) $r$ nodes but not with a node in $S_2\setminus U_1$, or (iii) $r$ nodes but not with $v_1 \in U_1$ (highlighted in yellow in Fig.~\ref{fig:complete2}). If (i) or (ii), $u_1$ forms a $(k+1)$-clique with $C_{k,1}$. If (iii), $u_1$ forms another $k$-clique $C_{k,2}$ with $U_1' \setminus\{v_1\}$. Note that $u_1$ has edges with all nodes in $S_2\setminus\{v_1\}$. Now, if (i) or (ii), we have a $(k+1)$-clique and done. If (iii), let $U_2= C_{k,1} \cap C_{k,2} = \{v_2,\cdots,v_k\}$ and $U_2'=C_{k,1} \cup C_{k,2} =\{v_1,\cdots,v_k,u_1\}$. We also swap $u_1$ with any of the nodes in $S_2\setminus U_1'$ to get $S_1$ and $S_2$ shown on Fig.~\ref{fig:complete2} (b). Then, since $S_1$ is $r$-reachable, another node $u_2 \in S_1$ (colored in red in Fig.~\ref{fig:complete2}) has edges with at least $r$ nodes in $S_2$. Again, there are three cases: $u_2$ has edges with (i) all $r+1$ nodes in $S_2$, (ii) $r$ nodes but not with a node in $S_2\setminus U_2$, or (iii) $r$ nodes but not with $v_2 \in U_2$ (highlighted in red in Fig.~\ref{fig:complete2}). If (i), $u_2$ forms a $(k+1)$-clique with $C_{k,j}$ for any $j\in \{1,2\}$. If (ii), there are $3$ subcases: $u_2$ does not have an edge only with (1) $u_1$, (2) $v_1$, or (3) any other node in $S_2\setminus U_2$. Then, $u_2$ forms a $(k+1)$-clique with $C_{k,1}$ if (1), $C_{k,2}$ if (2), or $C_{k,j}$ for any $j\in \{1,2\}$ if (3). If (iii), $u_2$ forms another $k$-clique $C_{k,3}$ with $U_2'\setminus\{v_1,v_2\}=\{v_3,\cdots,v_k,u_1\}$. Also note that $u_2$ has edges with all nodes in $S_2\setminus \{v_2\}$ including $u_1$. Now, if (i) or (ii), we have a $(k+1)$-clique and done. If (iii), we update $U_3=\bigcap\limits_{j=1}^3 C_{k,j}=\{v_3,\cdots,v_k\}$ and $U_3'=\bigcup\limits_{j=1}^3 C_{k,j}=\{v_1,\cdots,v_k,u_1,u_2\}$. We also swap $u_2$ with any node in $S_2\setminus U_2'$. Again, since $S_1$ is $r$-reachable, there is a node $v_3 \in S_1$ that has edges with at least $r$ nodes in $S_2$. There are three cases: $u_3$ has edges with (i) all $r+1$ nodes in $S_2$, (ii) $r$ nodes but not with a node in $S_2\setminus U_3$, or (iii) $r$ nodes but not with $v_3 \in U_3$. If (i), $u_3$ forms a $(k+1)$-clique with $C_{k,j}$ for any $j\in \{1,2,3\}$. If (ii), there are $4$ subcases: $u_3$ does not have an edge only with (1) one of $\{u_1,u_2\}$, (2) $v_1$, (3) $v_2$, or (4) any other node in $S_2\setminus U_2$. Then, $u_3$ forms a $(k+1)$-clique with $C_{k,1}$ if (1), $C_{k,2}$ if (2), $C_{k,3}$ if (3), and $C_{k,j}$ for any $j\in \{1,2,3\}$ if (4). If (iii), $u_3$ forms another $k$-clique $C_{k,4}$ with $U_3'\setminus\{v_1,v_2,v_3\}=\{v_4,\cdots,v_k,u_1,u_2\}$. Also note that $u_3$ has edges with all nodes in $S_2\setminus\{v_3\}$ including $u_1,u_2$. Now, if (i) or (ii), we have a $(k+1)$-clique and done. If (iii), we swap $u_3$ with a node in $S_2\setminus U_3'$, continuing the process.
    
    Likewise, unless we get (i) or (ii), we always get three cases as illustrated in the previous steps. Let $u_n \in S_1$ have edges with at least $r$ nodes in $S_2$, $n\in\{1,\cdots,k\}$. There are three cases: $u_n$ has edges with (i) all $r+1$ nodes in $S_2$, (ii) $r$ nodes but not with a node in $S_2\setminus U_n$ where $U_n=\bigcap\limits_{j=1}^{n}C_{k,j}=\{v_n,\cdots,v_k\}$, or (iii) $r$ nodes but not with $v_n \in U_n$. If (i), $u_n$ forms a $(k+1)$-clique with $C_{k,j}$ for any $j\in \{1,\cdots, n\}$. If (ii), there are $n+1$ subcases: $u_n$ does not have an edge only with (1) one of $\{u_1\cdots u_{n-1}$\}, (2) $v_1$, (3) $v_2$, $\cdots,$ ($n$) $v_{n-1}$, or ($n+1$) any other node in $S_2\setminus U_n$. Then, $u_n$ forms a $(k+1)$-clique with $C_{k,1}$ if (1), $C_{k,q}$ if ($q$) $\forall q \in \{2,3,\cdots,n-1,n$\}, or  $C_{k,j}$ for any $j \in \{1,\cdots, n\}$ if ($n+1$). If (iii), $u_n\in S_1$ forms another $k$-clique $C_{k,n+1}$ with $U_n'\setminus \{v_1,\cdots, v_n\}$ where $U_n'=\bigcup\limits_{j=1}^nC_{k,j}=\{v_1,\cdots,v_k,u_1,\cdots,u_{n-1}$\}. Now, if (i) or (ii), we have a $(k+1)$-clique and done. If (iii) we swap $u_n$ with any node in $S_2\setminus U_n'$ to continue the procedure. 
    
    If we continue encountering the third case for $k$ times, since $S_1$ is $r$-reachable, we will have $u_{k+1} \in S_1$ that has edges with at least $r$ nodes in $S_2$. Then, there are two cases: $u_{k+1}$ has edges with (1) all $r+1$ nodes in $S_2$, or (2) $r$ nodes but not with a node in $S_2\setminus U_{k+1}$ where $U_{k+1}=\bigcap\limits_{j=1}^{k+1}C_{k,j}=\emptyset$. In both cases, $u_{k+1}$ forms a $(k+1)$-clique with $C_{k,j}$ for any $j \in \{1,\cdots, k+1\}$, showing a $(k+1)$-clique must exist. Also note that $U_{k+1}'=\{v_1,\cdots,v_k,u_1,\cdots,u_{k}\} \subseteq S_2$. Since (i) at worst $u_1,\cdots, u_k$ as well as $v_1,\cdots,v_k$ need to be in $S_2$ for a $(k+1)$-clique to be formed and (ii) $|S_2|=r+1$, $k\leq \lfloor\frac {r+1}{2}\rfloor$. That means $\mathcal G$ contains a $\big(\lfloor\frac {r+3}{2}\rfloor\big)$-clique.

Continuing from the previous paragraph, at $k= \lfloor\frac {r+1}{2}\rfloor$ and for even values of $r$, at worst the third case mentioned above is repeated $\frac r 2$ times. Then, $S_2$ contains $K=\{u_1,\cdots, u_{\frac r 2}\}$, $C_{k,1}=\{v_1,\cdots, v_{\frac r 2}$\}, and a node $p$ not in $\big(\frac {r+2} 2\big)$-clique. Note that each of $K$ and $C_{k,1}$ forms a $\big(\frac {r} 2\big)$-clique. We know that a node $u_m \in S_1$, $m=\frac {r+2} 2$, must form a $m$-clique $C$ with either $K$ or $C_{k,1}$. We swap $u_m \in S_1$ with $p \in S_2$. Now let $P=S_2\setminus C$. Note that nodes in $P$ form a $(\frac r 2)$-clique such that $P\cap C=\emptyset$. Since $S_1$ is $r$-reachable even after $u_m$ is swapped out, it has a node $u_{m+1} \in S_1$ that has edges with at least $r$ nodes in $S_2$. There are three cases: $u_{m+1}$ has edges with (i) $r+1$ nodes in $S_2$, (ii) $r$ nodes but not with a node in $P$, or (iii) $r$ nodes but not with a node in $C$. If (i) or (ii), $u_{m+1}$ forms a $\big(\frac {r+4} 2\big)$-clique with all nodes in $C$. If (iii), $P\cup\{u_{m+1}\}$ form a $\big(\frac {r+2} 2\big)$-clique $P'$ such that $P'\cap C=\emptyset$. In this case, WLOG, let $P' \subset S_1$ and $C \subset S_2$ such that $|S_1|=|S_2|=r$. Since either $S_1$ or $S_2$ is $r$-reachable, all of the nodes in either $P'$ or $C$ have edges with one additional node in $S_2$ or $S_1$ respectively, forming a $\big(\frac {r+4} 2\big)$-clique. Thus, $\mathcal G$ must contain a $\big(\frac {r+4} 2\big)$-clique for even $r$. Since $\mathcal G$ must have a clique size of $\frac {r+3} 2$ and $\frac {r+4} 2$ for odd and even $r$ respectively, $\mathcal G$ must have a $\big(\lfloor{\frac {r+4} 2}\rfloor\big)$-clique.
\end{proof}


\begin{figure*}[h]
\centering
\includegraphics[width=\textwidth]{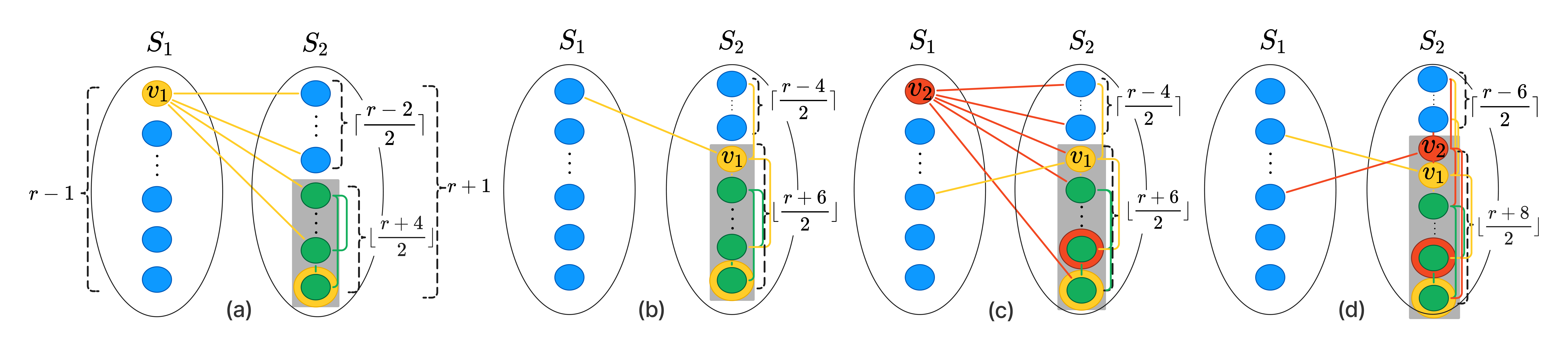}
  \caption{This presents the snapshots of two stages of inductive developments of $S_1$ and $S_2$ in the proof for Theorem~\ref{thm:min_addition2}. Figure (a) indicates $|S_1|=r-1$ and $|S_2|=r+1$, which are fixed, and thus omitted in the other figures. In Figure (a), $S_2$ contains $r+1$ nodes including $\big(\lfloor{\frac {r+4} 2}\rfloor\big)$-clique $C$ that are colored in green, while $S_1$ contains the remaining $r-1$ nodes. A gray box represents a vertex set $\mathcal V_S$ of $S$. Any node in the gray box is in $\mathcal V_S$. As $S_1$ is $r$-reachable, a node $v_1 \in S_1$ has edges with any $r$ nodes in $S_2$. Figure (a) shows the case where the yellow node $v_1$ does not have an edge with one node in $\mathcal V_S$, which is highlighted in yellow. We swap it with any node in $S_2\setminus \mathcal V_S$. Then, as $v_1$ becomes a vertex of subgraph $S$, as shown on Figure (b), $|\mathcal E_S|$ increases at least by $\lfloor{\frac {r+2} 2}\rfloor$. Since $S_1$ is $r$-reachable, a node $v_2 \in S_1$ (colored in red in the figure) has edges with at least $r$ nodes in $S_2$. Figure (c) presents the case where $v_2$ only has $r$ edges and does not have one with a node in $\mathcal V_S$, which is highlighted in red. Similar to before, we swap $v_2$ with a node in $S_2\setminus \mathcal V_S$. Then, as we add $v_2$ to the subgraph $S$, as shown on Figure (d), $|\mathcal E_S|$ increases at least by $\lfloor{\frac {r+4} 2}\rfloor$. This process continues until $S_2=\mathcal V_S$.}
  \label{fig:thm2}
  \vspace{-3mm}
\end{figure*}

Comparing Lemma~\ref{lem:complete} and~\ref{lem:complete2}, one can see that the size of a necessary maximum clique in an $r$-robust graph decreases as its number of node increases from $2r-1$ to $2r$. Using the results of Lemma~\ref{lem:complete2}, we further examine the necessary structure for $r$-robust graphs of $2r$ nodes.

\begin{lemma}
\label{lem:S}
       Let $\mathcal G = (\mathcal V, \mathcal{E})$ be an $r$-robust graph with $|\mathcal V|=2r$. Then, $\mathcal G$ contains an induced subgraph $S=(\mathcal V_S, \mathcal{E}_S)$ such that $|\mathcal V_S|=r+1$ and $|\mathcal{E}_S|\geq \lfloor{\frac {r^2+2} 2}\rfloor$.
\end{lemma}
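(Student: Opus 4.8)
The plan is to adapt the ``$r$-reachability forces edges'' iteration used in the proofs of Lemma~\ref{lem:complete2} and Theorem~\ref{thm:min_addition}, but now to accumulate edges inside a vertex set that is grown one node at a time; Fig.~\ref{fig:thm2} illustrates this construction. Assume $r \ge 2$ (the case $r=1$ is immediate). First I would apply Lemma~\ref{lem:complete2} to fix an $m$-clique $C \subseteq \mathcal V$ with $m = \lfloor \frac{r+4}{2} \rfloor$, and initialize $\mathcal V_S := C$, so that the induced subgraph on $\mathcal V_S$ already has at least $\frac{m(m-1)}{2}$ edges. Next I would write $\mathcal V$, which has $2r$ nodes, as a disjoint union $S_1 \cup S_2$ with $|S_1| = r-1$, $|S_2| = r+1$, and $C \subseteq S_2$. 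Since every node of $S_2$ has at most $|S_1| = r-1 < r$ neighbours outside $S_2$, the set $S_2$ is never $r$-reachable, so $r$-robustness forces $S_1$ to be $r$-reachable; and because the construction below keeps $|S_1|$ and $|S_2|$ fixed, $S_1$ remains the $r$-reachable side at every step.

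The core of the argument is an iteration that enlarges $\mathcal V_S$ while maintaining $C \subseteq \mathcal V_S \subseteq S_2$. At the start of step $i$, set $p := |\mathcal V_S| = m + i - 1$. Invoking $r$-reachability of $S_1$, choose $v_i \in S_1$ having at least $r$ neighbours among the $r+1$ nodes of $S_2$; then $v_i$ misses at most one node of $S_2$, hence at most one node of $\mathcal V_S$, so $v_i$ is adjacent to at least $p-1$ nodes of the current $\mathcal V_S$. Now move $v_i$ into $\mathcal V_S$ and, to restore $|S_2| = r+1$, move some node of $S_2 \setminus \mathcal V_S$ back into $S_1$; such a node is available precisely while $\mathcal V_S \subsetneq S_2$. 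Because $\mathcal V_S$ never loses a vertex, all $\frac{m(m-1)}{2}$ edges of $C$ survive in $\mathcal E_S$, and the edges added to $\mathcal E_S$ at step $i$ are precisely those joining $v_i$ to the previous $\mathcal V_S$, of which there are at least $p-1 = m+i-2$, and they are disjoint from every edge counted before (since $v_i \notin C$ and $v_i \ne v_j$, and each earlier $v_j$ was joined only to $\mathcal V_S$ at its own, strictly earlier, stage). After $t := r+1-m$ steps the iteration stops with $\mathcal V_S = S_2$ and $|\mathcal V_S| = r+1$, and we obtain $|\mathcal E_S| \ge \frac{m(m-1)}{2} + \sum_{i=1}^{t} (m+i-2)$.

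It then remains to evaluate the right-hand side. Splitting into $r$ even, where $(m,t) = \big(\frac{r+4}{2}, \frac{r-2}{2}\big)$, and $r$ odd, where $(m,t) = \big(\frac{r+3}{2}, \frac{r-1}{2}\big)$, the sum simplifies to $\frac{r^2+2}{2}$ and $\frac{r^2+1}{2}$ respectively, both of which equal $\lfloor \frac{r^2+2}{2} \rfloor$, which is exactly the claimed bound. I expect the main obstacle to be the bookkeeping rather than any isolated deep step: one must confirm that each swap truly preserves $|S_1|=r-1$ and $|S_2|=r+1$ (so that $S_1$ stays $r$-reachable), that a node of $S_2 \setminus \mathcal V_S$ is always available to move out until $\mathcal V_S$ exhausts $S_2$, that the edge bundles added at distinct steps are genuinely distinct, and that the closing parity computation lands precisely on $\lfloor \frac{r^2+2}{2} \rfloor$.
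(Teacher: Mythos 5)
Your proposal is correct and follows essentially the same route as the paper's proof: seed $\mathcal V_S$ with the $\lfloor\frac{r+4}{2}\rfloor$-clique from Lemma~\ref{lem:complete2}, keep $|S_1|=r-1$ and $|S_2|=r+1$ so that $S_1$ stays $r$-reachable, and absorb one $r$-reachable node per step, each contributing at least $|\mathcal V_S|-1$ new edges, with the same closing parity computation. Your explicit justification that $S_2$ can never be $r$-reachable and your check that the edge bundles are disjoint are details the paper leaves implicit, but the argument is the same.
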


\begin{proof} 
Since $\mathcal G$ is $r$-robust, for any pair of disjoint, non-empty subsets $S_1$, $S_2$ of $\mathcal V$, at least one of them is $r$-reachable. From Lemma~\ref{lem:complete2}, $\mathcal G$ contains a $\big(\lfloor\frac {r+4}{2}\rfloor\big)$-clique $C$.
    WLOG, let $S_2$ contain $r+1$ nodes including $C$ (green in Fig.~\ref{fig:thm2}), and let $S_1$ contain the remaining $r-1$ nodes. Initially we define $S=C$. Then $|\mathcal{E}_S|= \big(\frac 1 2 \big) \big(\lfloor\frac {r+4}{2}\rfloor\big) \big(\lfloor\frac {r+2}{2}\rfloor\big)$. Note that the nodes in $\mathcal V_S$ are contained in a gray box in Fig.~\ref{fig:thm2}. Since $|S_1|=r-1$ and $|S_2|=r+1$, $S_1$ must be $r$-reachable. 

    With Lemma~\ref{lem:complete2}, we have a $2$-clique and $3$-clique when $r=1$ and $r=2$ respectively, satisfying the statement in the lemma. Now, we consider $r\geq3$. With this setup, since $S_1$ is $r$-reachable, a node $v_1 \in S_1$ (yellow in Fig.~\ref{fig:thm2}) has edges with at least $r$ nodes in $S_2$. We swap $v_1 \in S_1$ with any node in $S_2\setminus \mathcal V_S$. Note that $|\mathcal V_S|=\lfloor\frac {r+4}{2}\rfloor$ at this point. Now, we add $v_1$ as a vertex of $S$ (i.e. $\mathcal V_S$ now contains $v_1$). Then, $|\mathcal E_S|$ increases at least by $\lfloor\frac {r+2}{2}\rfloor$, since $v_1$ can have no edge with at most one of the nodes in $\mathcal V_S$ (which is highlighted in yellow in Fig.~\ref{fig:thm2}). Since $S_1$ is $r$-reachable even without $v_1$, a node $v_2 \in S_1$ (colored in red in Fig.~\ref{fig:thm2}) has edges with at least $r$ nodes in $S_2$. Note that $|\mathcal V_S|=\lfloor\frac {r+6}{2}\rfloor$ at this point. We swap $v_2 \in S_1$ with a node in $S_2\setminus \mathcal V_S$ and add $v_2$ as a new vertex of $S$ (i.e. $\mathcal V_S$ now contains $v_2$), as shown on Fig.~\ref{fig:thm2} (d). This increases $|\mathcal E_S|$ at least by $\lfloor\frac {r+4}{2}\rfloor$, as $v_2$ could have no edge with at most one of the nodes in $\mathcal V_S$ (that is highlighted in red in Fig.~\ref{fig:thm2}). Then, since $S_1$ is still $r$-reachable, another node $v_3 \in S_1$ has edges with at least $r$ nodes in $S_2$. Note that $|\mathcal V_S|=\lfloor\frac {r+8}{2}\rfloor$. Again, we swap $v_3 \in S_1$ with a node in $S_2\setminus \mathcal V_S$ and add $v_3$ as a new vertex of $S$. Then, $|\mathcal E_S|$ increases at least by $\lfloor\frac {r+6}{2}\rfloor$. 
   
   This process continues until when $v_p \in S_1$, $p=r+1-\lfloor\frac {r+4}{2}\rfloor$, gets swapped with a node in $S_2\setminus \mathcal V_S$ where $\mathcal V_S=C\cup\{v_1,v_2,\cdots, v_{p-1}\}$. Here, adding $v_p$ as a new vertex of $S$ increases $|\mathcal E_S|$ at least by $r-1$. At this point, $S_2=\mathcal V_S$. Then, $|\mathcal V_S|=r+1$ and $|\mathcal E_S| \geq \big(\frac 1 2\big)\big(\lfloor\frac {r+4}{2}\rfloor\big) \big(\lfloor\frac {r+2}{2}\rfloor\big) + (\lfloor\frac {r+2}{2}\rfloor) + (\lfloor\frac {r+4}{2}\rfloor) + (\lfloor\frac {r+6}{2}\rfloor) +\cdots + (r-2) +(r-1)$.
    
    We remove the floor functions: if $r$ is even, $|\mathcal E_S| \geq \big(\frac 1 2\big)\big(\frac {r+4}{2}\big)\big(\frac {r+2} 2\big) +  \big(\frac {r+2} 2\big)+ \big(\frac {r+4}{2}\big) + \cdots + (r-2) +(r-1)= \frac {r^2+2} 2$. If $r$ is odd, $|\mathcal E_S| \geq \big(\frac 1 2\big)\big(\frac {r+3}{2}\big)\big(\frac {r+1} 2\big) +  \big(\frac {r+1} 2\big)+ \big(\frac {r+3}{2}\big) + \cdots + (r-2) +(r-1) = \frac {r^2+1} 2$. Thus, we have $|\mathcal V_S|=r+1$ and $|\mathcal E_S| \geq \lfloor{\frac{r^2+2} 2}\rfloor$.
\end{proof}

Now we use Lemma~\ref{lem:S} to show the lower bound on number of edges of an $r$-robust graph with $2r$ nodes.
\begin{thm}
        \label{thm:min_addition2}
    Let $\mathcal G = (\mathcal V, \mathcal{E})$ be an $r$-robust graph with $|\mathcal V|=2r$. Then, \begin{equation}
        \label{eq:bound2}
        |\mathcal{E}|\geq
            \left\lfloor{\frac {r(3r-2)+2} 2 }\right\rfloor.
    \end{equation}
\end{thm}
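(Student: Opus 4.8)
The plan is to run the same ``vertex-absorption'' counting argument used in the proof of Theorem~\ref{thm:min_addition}, but now seeded by Lemma~\ref{lem:S} rather than by Lemma~\ref{lem:complete}. By Lemma~\ref{lem:S}, $\mathcal G$ already contains an induced subgraph $S=(\mathcal V_S,\mathcal E_S)$ with $|\mathcal V_S|=r+1$ and $|\mathcal E_S|\geq\lfloor(r^2+2)/2\rfloor$; these edges are counted ``for free.'' First I would take $S_1=\mathcal V\setminus\mathcal V_S$ and $S_2=\mathcal V_S$, so that $|S_1|=r-1$ and $|S_2|=r+1$. Any vertex of $S_2$ has at most $|S_1|=r-1<r$ neighbours outside $S_2$, so $S_2$ is not $r$-reachable, and hence $r$-robustness forces $S_1$ to be $r$-reachable.

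Then I would absorb the vertices of $S_1$ into $S_2$ one at a time. In step $k$ (for $k=1,\dots,r-1$), $r$-reachability of $S_1$ yields a vertex $v_k\in S_1$ with at least $r$ neighbours in the current $S_2=\mathcal V_S\cup\{v_1,\dots,v_{k-1}\}$. These $r$ edges are incident to $v_k$ and are distinct from every previously counted edge: the edges of $\mathcal E_S$ lie inside $\mathcal V_S$, while an edge charged at an earlier step $j<k$ is incident to $v_j\neq v_k$ and its other endpoint lies in $\mathcal V_S\cup\{v_1,\dots,v_{j-1}\}$, a set that does not contain $v_k$; hence step $k$ contributes $r$ genuinely new edges. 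Moving $v_k$ into $S_2$ gives $|S_1|=r-k-1$, and provided this is positive, every vertex of $S_2$ still has at most $|S_1|<r$ neighbours outside $S_2$, so $S_2$ stays non-$r$-reachable and $S_1$ stays $r$-reachable. After $r-1$ iterations $S_1$ is empty and we have charged at least $r(r-1)$ edges beyond $\mathcal E_S$.

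Combining the two contributions, $|\mathcal E|\geq\lfloor(r^2+2)/2\rfloor+r(r-1)$, and a one-line parity check on $r$ (strip the floor, add, and re-insert it) rewrites this as $\lfloor(r(3r-2)+2)/2\rfloor$, which is the claimed bound. The only point requiring genuine care — exactly as in Lemma~\ref{lem:S} and Theorem~\ref{thm:min_addition}, and as illustrated in Fig.~\ref{fig:thm2} — is the bookkeeping that the $r$ edges charged at step $k$ are disjoint from all earlier ones and that the forced $r$-reachability of $S_1$ survives all $r-1$ swaps; no new conceptual ingredient beyond those earlier results is needed.
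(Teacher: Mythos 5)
Your proposal is correct and follows essentially the same route as the paper's own proof: seed with the induced subgraph $S$ from Lemma~\ref{lem:S}, set $S_1=\mathcal V\setminus\mathcal V_S$ and $S_2=\mathcal V_S$, absorb the $r-1$ vertices of $S_1$ one at a time charging $r$ new edges each, and finish with the parity check. Your explicit justification that the $r$ edges charged at step $k$ are disjoint from all previously counted edges is a welcome tightening of a step the paper merely asserts.
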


\begin{proof}
    Since $\mathcal G$ is $r$-robust, for any pair of disjoint, non-empty subsets $S_1$, $S_2$ of $\mathcal V$, at least one of them is $r$-reachable.
    From Lemma~\ref{lem:S}, $\mathcal G$ contains an induced subgraph $S=(\mathcal V_S, \mathcal E_S)$ such that $|\mathcal V_S|=r+1$ and $|\mathcal{E}_S|\geq \lfloor{\frac {r^2+2} 2}\rfloor$.

    Now, let $S_2=\mathcal V_S$ and $S_1=\mathcal V\setminus \mathcal V_S$, which means $|S_1|=r-1$ and $|S_2|=r+1$. This enforces $S_1$ to be $r$-reachable. Then, a node $v_1 \in S_1$ has edges with at least $r$ nodes in $S_2$. This requires at least $r$ edges (i.e. $|\mathcal E| \geq |\mathcal E_S|+r$). We move $v_1$ from $S_1$ to $S_2$. Then since $S_1$ is $r$-reachable even without $v_1$, a node $v_2 \in S_1$ has nodes with at least $r$ nodes in $S_2$. This requires at least $r$ new edges (i.e. $|\mathcal E| \geq |\mathcal E_S|+2r$). We move $v_2 \in S_1$ to $S_2$. Then, since $S_1$ is $r$-reachable, a node $v_3 \in S_1$ has edges with $r$ nodes in $S_2$. This requires at least $r$ edges (i.e. $|\mathcal E| \geq |\mathcal E_S|+3r$). We then move $v_3 \in S_1$ to $S_2$.
    
    We can continue this process of (1) drawing edges between $v_i \in S_1$ and $r$ nodes in $S_2$ and (2) moving $v_i \in S_1$ into $S_2$ until $S_1$ becomes empty. Then, we have to do it for $r-1$ times, as we initially have $|S_1|=r-1$. In other words, we have $r-1$ different pairs of $S_1$ and $S_2$, and each pair requires $r$ new edges. Since we have started from the setting where $S_2= \mathcal V_S$, we have $|\mathcal E| \geq |\mathcal E_S| + r(r-1)$. Therefore, if $r$ is even, $|\mathcal E| \geq \frac {r^2+2} 2 +r(r-1)=\frac {r(3r-2)+2} 2$. If $r$ is odd, and $|\mathcal E| \geq \frac {r^2+1} 2 +r(r-1) = \frac {r(3r-2)+1} 2$.
\end{proof}

These lemmas and theorems give us insights into the structures of graphs with $2r-1$ and $2r$ nodes to maximize their robustness levels to $r$. Using these insights, we construct classes of $r$-robust graphs that maintain the maximum robustness with the least number of edges in the next section.

\section{Construction of Sparsest $r$-robust Graphs}
\label{sec:construction}
In this section, we use the lemmas and theorems from the previous section to introduce graphs of maximum robustness with minimal sets of edges. By doing so, we also show the bounds in Theorem~\ref{thm:min_addition} and~\ref{thm:min_addition2} are tight.

Let $n,r\in \mathbb Z_{>0}$ and let $\mathcal A$ be a set of all simple undirected graphs. Let $r(\mathcal G)$ be robustness of a graph $\mathcal G \in \mathcal A$. Let $\mathcal A_{n,r}$ be a subset of $\mathcal A$ that contains all $r$-robust graphs with $n$ nodes i.e. $\mathcal A_{n,r}=\{\mathcal G=(\mathcal V, \mathcal E) \in \mathcal A: r(\mathcal G)= r, |\mathcal V|=n\}$. We now define a class of graphs with the minimum number of edges among the set of $r$-robust graphs of $n$ nodes:

\begin{definition}[$\mathbf {(n,r)}$-$\textbf{robust Graph}$] Consider a graph $\mathcal G=(\mathcal V, \mathcal E) \in \mathcal A_{n,r}$. It is $\mathbf {(n,r)}$-\textbf{robust} if $|\mathcal{E}|\leq |\mathcal{E}_i|$ $\forall \mathcal G_i=(\mathcal V_i, \mathcal E_i)\in \mathcal A_{n,r}$, $i\in \{1,2,\cdots, |\mathcal A_{n,r}|-1, |\mathcal A_{n,r}|\}$.
\end{definition}

In this section, we limit ourselves to two special cases, namely $(2r-1,r)$-robust and $(2r,r)$-robust graphs, that also have maximum robustness.

For the first case, we show that $(2r-1,r)$-robust graph is a subclass of graphs called $F$-elemental graphs \cite{Guerrero17}, which we introduce below:

\begin{definition}[\textbf{$F$-elemental Graph}\cite{Guerrero17}]
    An $F$-elemental graph is a graph with $n=4F+1$ nodes that is $r$-robust with $r=2F+1$, $F\in \mathbb{Z}_{>0}$.
\end{definition}
\begin{prop}
\label{prop:felemental}
     Let $\mathcal G = (\mathcal V, \mathcal{E})$ be a graph where $|\mathcal V|=4F+1=2r-1$. Let $K\subset \mathcal V$ be a set of $r-1=2F$ nodes. If each node $i$ in $K$ connects to all nodes in $\mathcal V\setminus\{i\}$, and the remaining $r=2F+1$ nodes in $\mathcal V\setminus K$ form a connected subgraph, then $\mathcal G$ is $r$-robust \cite{Guerrero17}.
\end{prop}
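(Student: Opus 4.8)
The plan is to check the definition of $r$-robustness head-on. Given arbitrary disjoint, nonempty $S_1,S_2\subseteq\mathcal V$, I will exhibit one of them as $r$-reachable. The opening move is a counting observation: since $|S_1|+|S_2|\le|\mathcal V|=2r-1$, after relabeling we may assume $|S_1|\le r-1$ (the floor of $(2r-1)/2$). I will then argue that this smaller set $S_1$ is always $r$-reachable, so $S_2$ never has to be examined.

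Next I would split into two cases according to whether $S_1$ meets the ``hub'' set $K$. If some $i\in K\cap S_1$, then $i$ is adjacent to every other vertex, so $\mathcal V_i^N\setminus S_1=\mathcal V\setminus S_1$ has cardinality $2r-1-|S_1|\ge r$; hence $S_1$ is $r$-reachable through $i$. If instead $S_1\cap K=\emptyset$, then $S_1\subseteq L:=\mathcal V\setminus K$, and since $|L|=r>r-1\ge|S_1|$, $S_1$ is a proper nonempty subset of the connected induced subgraph on $L$; therefore there is an edge from some $i\in S_1$ to some $j\in L\setminus S_1$. That node $i$ lies in $L$, hence is adjacent to all $r-1$ vertices of $K$ (none of which is in $S_1$) as well as to $j\notin S_1$, so $\mathcal V_i^N\setminus S_1\supseteq K\cup\{j\}$ and $|\mathcal V_i^N\setminus S_1|\ge(r-1)+1=r$. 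In either case $S_1$ is $r$-reachable, so by Definition~\ref{robustness} the graph $\mathcal G$ is $r$-robust.

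The work is almost entirely bookkeeping, so I do not anticipate a genuine obstacle; the two points to be careful about are (i) that the pigeonhole really yields $|S_1|\le r-1$ and not merely $|S_1|\le r$ (it does, because $|\mathcal V|=2r-1$ is odd), and (ii) that in the second case the connectivity of the induced subgraph on $L$ is applied to an honestly proper, nonempty subset, which is exactly what $|S_1|<|L|$ guarantees. The degree saturation of the universal vertices in $K$ does the heavy lifting in the first case, while a single crossing edge out of the connected part $L$ closes the second; since the statement is taken from \cite{Guerrero17}, the proof is expected to be short and to use only Definitions~\ref{reachability} and~\ref{robustness}.
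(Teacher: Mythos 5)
Your proof is correct, and it is worth noting that the paper itself supplies no proof of this proposition: it is stated as a cited result from \cite{Guerrero17} (see also the accompanying Remark), so there is no in-paper argument to compare against. Your two-case argument is a valid, self-contained verification: the pigeonhole step $|S_1|\le r-1$ is sound because $|\mathcal V|=2r-1$ is odd; when $S_1$ meets $K$ the universal vertex $i$ gives $|\mathcal V_i^N\setminus S_1|=2r-1-|S_1|\ge r$; and when $S_1\subseteq \mathcal V\setminus K$ the connectivity of the induced subgraph on the remaining $r$ nodes supplies one crossing edge, which together with the $r-1$ universal neighbors in $K$ (all outside $S_1$) yields the required $r$ outside neighbors. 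This is essentially the standard argument one would expect in the cited reference, and it uses only Definitions~\ref{reachability} and~\ref{robustness} as you anticipated.
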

\begin{remark}
    While Proposition~\ref{prop:felemental} specifically addresses odd values of $r$, the proof remains valid regardless of whether r is odd or even. Therefore, Proposition~\ref{prop:felemental} also extends to even values of $r$. For more detail, the reader is referred to \cite{Guerrero17}.
\end{remark}

Now we introduce how to construct $(2r-1,r)$-robust graphs in the following proposition:

\begin{prop}
\label{prop:2r-1}
 Let $\mathcal G = (\mathcal V, \mathcal{E})$ be a graph where $|\mathcal V|=2r-1$. Let $K\subset \mathcal V$ be a set of $r-1$ nodes. If each node $i$ in $K$ connects to all nodes in $\mathcal V\setminus\{i\}$, and the remaining $r$ nodes in $\mathcal V\setminus K$ form a tree graph, then $\mathcal G$ is $(2r-1,r)$-robust.
\end{prop}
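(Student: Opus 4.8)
The plan is to establish Proposition~\ref{prop:2r-1} in two parts: first verify that the constructed graph $\mathcal G$ is $r$-robust, and second verify that it attains the minimum edge count $\frac{3r(r-1)}{2}$ from Theorem~\ref{thm:min_addition}, so that it is in fact $(2r-1,r)$-robust. The first part should follow almost immediately from Proposition~\ref{prop:felemental}: a tree on $r$ vertices is a connected subgraph, so the hypothesis of Proposition~\ref{prop:felemental} is satisfied (with the remark extending it to all $r$, not just odd $r$), and hence $\mathcal G$ is $r$-robust. Since $|\mathcal V| = 2r-1$, the maximum possible robustness is $\lceil\frac{2r-1}{2}\rceil = r$, so $\mathcal G$ has robustness exactly $r$ and lies in $\mathcal A_{2r-1,r}$.

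For the second part, I would count the edges of $\mathcal G$ directly. The set $K$ has $r-1$ nodes, each adjacent to every other node in $\mathcal V\setminus\{i\}$; these nodes among themselves contribute $\binom{r-1}{2} = \frac{(r-1)(r-2)}{2}$ edges, and each of the $r-1$ nodes in $K$ is additionally joined to all $r$ nodes of $\mathcal V\setminus K$, contributing $r(r-1)$ edges. The remaining $r$ nodes form a tree, contributing exactly $r-1$ edges. Summing, $|\mathcal E| = \frac{(r-1)(r-2)}{2} + r(r-1) + (r-1) = \frac{(r-1)}{2}\big[(r-2) + 2r + 2\big] = \frac{(r-1)(3r)}{2} = \frac{3r(r-1)}{2}$. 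By Theorem~\ref{thm:min_addition}, every $r$-robust graph on $2r-1$ nodes has at least $\frac{3r(r-1)}{2}$ edges, so $\mathcal G$ achieves the lower bound; hence $|\mathcal E| \le |\mathcal E_i|$ for every $\mathcal G_i \in \mathcal A_{2r-1,r}$, and $\mathcal G$ is $(2r-1,r)$-robust by definition.

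I do not anticipate a major obstacle here, since the heavy lifting has already been done in Proposition~\ref{prop:felemental} (robustness) and Theorem~\ref{thm:min_addition} (lower bound); the proof is essentially a matter of plugging in and doing the edge arithmetic. The one point requiring a little care is justifying that a tree is the sparsest connected subgraph on $r$ nodes — i.e., that replacing "connected subgraph" in Proposition~\ref{prop:felemental} by "tree" loses nothing in robustness while minimizing edges — but this is standard graph theory ($r-1$ edges is the minimum for connectivity on $r$ vertices, achieved exactly by trees). I would also briefly note that the robustness is exactly $r$ and not larger, which follows from the general upper bound $r(\mathcal G)\le\lceil n/2\rceil$ cited from \cite{LeBlanc13}, ensuring membership in $\mathcal A_{2r-1,r}$ rather than some $\mathcal A_{2r-1,r'}$ with $r' \ne r$.
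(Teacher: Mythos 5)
Your proposal is correct and follows essentially the same route as the paper's own proof: invoke Proposition~\ref{prop:felemental} (via the remark for even $r$) to get $r$-robustness, then count edges as $\frac{(r-1)(r-2)}{2} + r(r-1) + (r-1) = \frac{3r(r-1)}{2}$ and appeal to Theorem~\ref{thm:min_addition} for minimality. Your added observations --- that the robustness is exactly $r$ by the $\lceil n/2\rceil$ upper bound (needed for membership in $\mathcal A_{2r-1,r}$), and that a tree is the sparsest connected graph on $r$ vertices --- are small refinements the paper leaves implicit.
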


\begin{proof}
To prove $\mathcal G$ is $(2r-1,r)$-robust, we need to prove two things: 1) it is $r$-robust and 2) $|\mathcal E|=\frac {3r(r-1)}2$ from Theorem~\ref{thm:min_addition}. 
We first prove that it is $r$-robust. By definition, $\mathcal G$ is an $F$-elemental graph with $F=\left\lfloor\frac {r-1} 2\right\rfloor$. Since we know that $F$-elemental graphs with $r=2F+1$ is $r$-robust from Proposition~\ref{prop:felemental}, we know that $\mathcal G$ is also $r$-robust.
Now we show that $\mathcal{E}$ is a minimal set. $\mathcal G$ has $\frac {(r-1)(r-2)} 2 + r(r-1)=\frac {(r-1)(3r-2)} 2$ edges for the nodes in $K$ connecting to every other node. Then, it further has $r-1$ edges as the remaining nodes form a tree graph. Adding them together, we get $\frac {3r(r-1)} 2$, which equals to \eqref{eq:bound1}.
\end{proof}

The difference between $F$-elemental graphs and $(2r-1,r)$-robust graphs is that the former includes the latter but not vice versa. Now, we investigate $(2r,r)$-robust graphs.

\begin{figure}
    \centering
    \includegraphics[width=5cm]{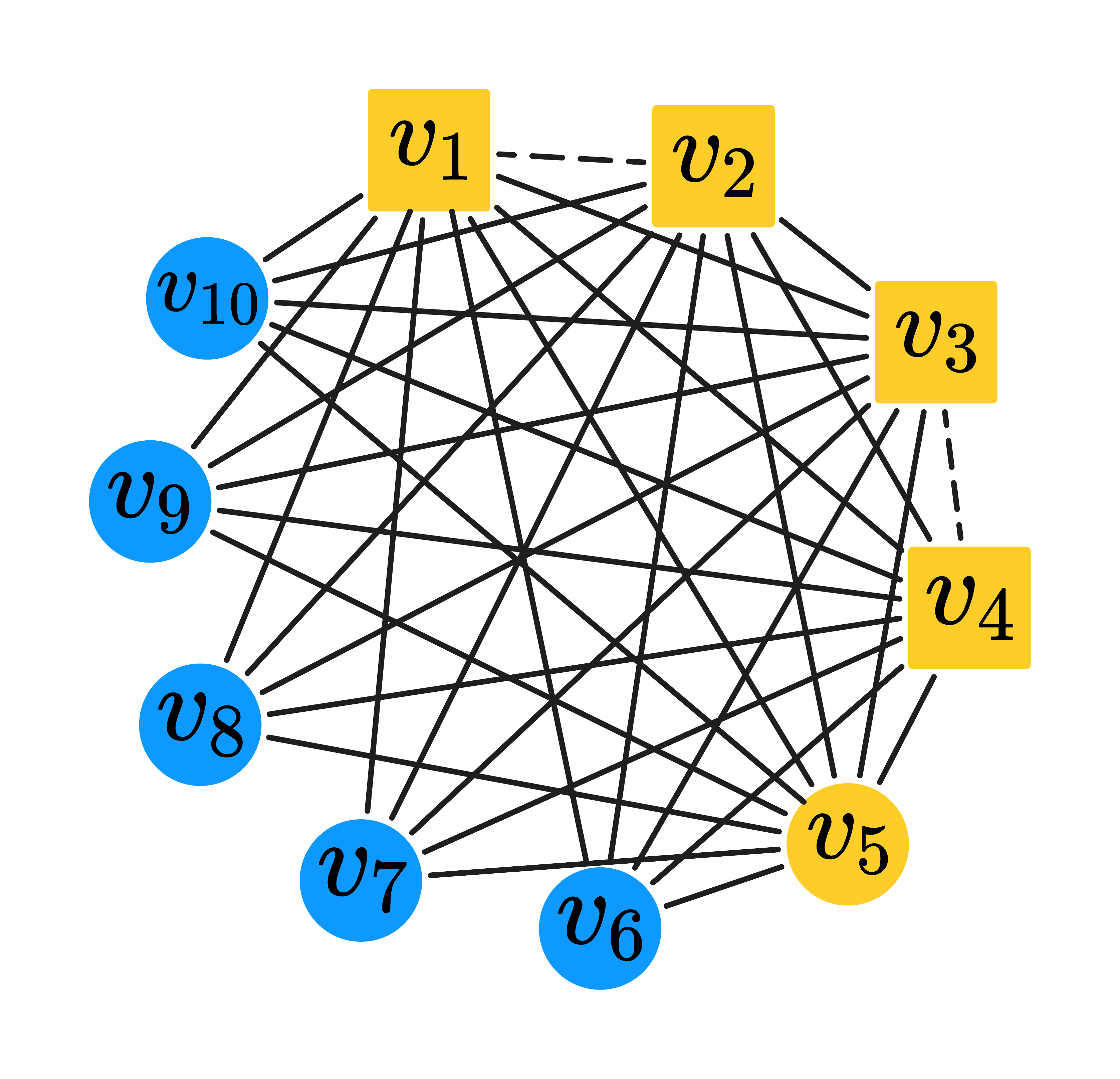}
    \caption{This figure visualizes a $(10,5)$-robust graph. Each of the $r=5$ nodes in $K = \{v_1,v_2,v_3,v_4,v_5\}$ (colored in yellow) connects to all $9$ nodes, and $\delta=4$ of them (in square) lose one edge which is represented in a dotted line. Since $\delta$ is always even ($4$ in this case), we can always form $\frac \delta 2$ disjoint pairs of two nodes from $\delta$ nodes ($2$ pairs i.e. $\{v_1,v_2\}$ and $\{v_3,v_4\}$ in this case). Then, we remove the edges between the pairs.}
    \label{fig:example}
    \vspace{-4mm}
\end{figure}

\begin{prop}
\label{prop:2r}
     Let $\mathcal G = (\mathcal V,\mathcal{E})$ be a graph with $|\mathcal V|=2r$. Let $K\subset \mathcal V$ be a set of $r$ nodes, where each node $i$ in $K$ connects to all nodes in $\mathcal V\setminus\{i\}$. Then, choose $\delta$ nodes in $K$ to form disjoint $\frac \delta 2$ pairs of nodes, where $\delta=r-1$ if $r$ is odd and $\delta=r-2$ if $r$ is even. If an edge between two nodes in each pair gets removed (see Fig.~\ref{fig:example}), $\mathcal G$ is $(2r,r)$-robust.
\end{prop}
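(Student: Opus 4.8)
The plan is to verify the two conditions packed into the phrase ``$(2r,r)$-robust'': that $\mathcal G$ is $r$-robust, and that $|\mathcal E|$ meets the lower bound $\left\lfloor\frac{r(3r-2)+2}{2}\right\rfloor$ of Theorem~\ref{thm:min_addition2}. The edge count is immediate: the $r$ vertices of $K$ contribute $\frac{r(r-1)}{2}$ edges inside $K$ and $r^2$ edges to the independent set $\mathcal V\setminus K$, after which $\frac{\delta}{2}$ edges are deleted, so $|\mathcal E|=\frac{r(r-1)}{2}+r^2-\frac{\delta}{2}=\frac{r(3r-1)}{2}-\frac{\delta}{2}$. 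Substituting $\delta=r-1$ for odd $r$ gives $\frac{r(3r-2)+1}{2}$, and $\delta=r-2$ for even $r$ gives $\frac{r(3r-2)+2}{2}$; in both cases this equals $\left\lfloor\frac{r(3r-2)+2}{2}\right\rfloor$. Hence once $r$-robustness is established, $\mathcal G$ is $(2r,r)$-robust and, as a byproduct, the bound of Theorem~\ref{thm:min_addition2} is tight.

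For $r$-robustness I would argue by contradiction: suppose $S_1,S_2$ are disjoint, nonempty, and neither is $r$-reachable, and WLOG $|S_1|\leq|S_2|$. Since $S_1\cap K$ and $S_2\cap K$ are disjoint subsets of the $r$-set $K$, and likewise inside $\mathcal V\setminus K$, we get $|S_1|+|S_2|\leq 2r$, hence $|S_1|\leq r$. The structural facts I would use are: every vertex of $\mathcal V\setminus K$ has degree exactly $r$ with all its neighbors in $K$; and a vertex of $K$ has degree $2r-1$, unless it is one of the $\delta$ selected vertices, in which case it has degree $2r-2$ and its unique non-neighbor is its pair-partner (which is also in $K$).

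With these facts I would establish, in order: (i) $S_1\cap K\neq\emptyset$, since otherwise any $w\in S_1\setminus K$ has all $r$ of its neighbors outside $S_1$; (ii) every vertex of $S_1\cap K$ is one of the $\delta$ ``paired'' vertices, since an unpaired $v\in S_1\cap K$ has at least $(2r-1)-(|S_1|-1)=2r-|S_1|\geq r$ neighbors outside $S_1$; (iii) for a paired $v\in S_1\cap K$, its partner is not in $S_1$ (otherwise $v$ still has $2r-|S_1|\geq r$ neighbors outside $S_1$), and therefore $v$ has at least $(2r-2)-(|S_1|-1)=2r-1-|S_1|$ neighbors outside $S_1$, which forces $|S_1|\geq r$ and hence $|S_1|=r$; (iv) consequently $|S_2|=r$, $S_1\cup S_2=\mathcal V$, and $|S_1\cap K|+|S_2\cap K|=r$. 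Steps (i)--(iii) then apply verbatim to $S_2$, whose size is also $r\leq r$, so $S_2\cap K$ is likewise a set of paired vertices. But $S_1\cap K$ and $S_2\cap K$ are then disjoint sets of paired vertices whose sizes sum to $r$, while there are only $\delta\leq r-1<r$ paired vertices in total --- a contradiction. Thus no such pair $S_1,S_2$ exists and $\mathcal G$ is $r$-robust.

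I expect the delicate part to be the degree bookkeeping in steps (ii)--(iv): tracking whether a $K$-vertex is paired, whether its partner sits in the same subset, and squeezing $|S_1|$ down to exactly $r$. It is worth noting that the final counting contradiction is precisely why the construction removes only $\left\lfloor\frac{r-1}{2}\right\rfloor$ edges --- removing more would create more ``paired'' vertices and could break the argument. Finally, the small cases $r\in\{1,2\}$, where $\delta=0$ and no edge is removed, should be checked directly, where $r$-robustness of $\mathcal G$ is straightforward.
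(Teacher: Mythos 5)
Your proposal is correct and follows essentially the same route as the paper: verify that the edge count $\frac{r(3r-1)}{2}-\frac{\delta}{2}$ meets the bound of Theorem~\ref{thm:min_addition2}, and establish $r$-robustness from the same structural facts (vertices outside $K$ have all $r$ neighbors in $K$; unpaired vertices of $K$ have full degree $2r-1$; the critical case $|S_1|=|S_2|=r$ is resolved because $\delta<r$ guarantees an unpaired vertex of $K$ in one of the two sets). Your contradiction framing with the explicit ``all of $S_1\cap K$ and $S_2\cap K$ must be paired'' counting is a slightly more systematic write-up of the paper's direct two-case argument, but it is the same underlying idea.
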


\begin{proof}
To prove $\mathcal G$ is $(2r,r)$-robust, we need to prove two things: 1) it is $r$-robust and 2) $|\mathcal E|$ equals to \eqref{eq:bound2}. First, we examine its edge set. The $r$ nodes in $K$ connecting to every other node adds up to a total of $\frac {r(r-1)+2r(r)} 2 =\frac {3r^2-r} 2$ edges. Then, if we subtract $\frac \delta 2$ from it, we get $\frac {r(3r-2)+2} 2$ for even $r$ and $\frac {r(3r-2)+1} 2$ for odd $r$, which equals to \eqref{eq:bound2}.

    Now, we prove that $\mathcal G$ is $r$-robust. Let $K' = \mathcal V\setminus K$. Then, $|K|=|K'|=r$. WLOG, let $S_1$ and $S_2$ be nonempty subsets of $\mathcal V$ such that $S_1\cap S_2=\emptyset$ and $|S_1|\leq|S_2|$. That means $1\leq|S_1|\leq r$. There are two cases. (i) If $K \cap S_1 = \emptyset$, let a node $i \in K'\cap S_1$. Then, $|\mathcal V_i^N\setminus S_1|\geq r$, as $K \subseteq \mathcal V_i^N \setminus S_1$, making $S_1$ $r$-reachable. (ii) If $K \cap S_1 \neq \emptyset$, let a node $i \in K\cap S_1$. At worst case $|\mathcal V_i^N\setminus S_1| = |\mathcal V_i^N|-|S_1\cap \mathcal V_i^N|$ where $|\mathcal V_i^N|=2r-2$ and $1\leq |S_1\cap \mathcal V_i^N|\leq r-1$. This is the worst case, as there exists at least one node $x_1 \in K$ that has $2r-1$ neighbors instead of $2r-2$ (i.e. $|\mathcal V_{x_1}^N|=2r-1$). If $1\leq |S_1\cap \mathcal V_i^N|\leq r-2$, $|\mathcal V_i^N\setminus S_1| \geq r$, but if $|S_1\cap \mathcal V_i^N|=r-1$, $|\mathcal V_i^N\setminus S_1|=r-1$, which does not make $S_1$ $r$-reachable. However, note that $|S_1\cap \mathcal V_i^N|=r-1$ implies $|S_1|=|S_2|=r$. WLOG, let $x_1 \in S_1$. Then, $|\mathcal V_{x_1}^N\setminus S_1| =2r-1-|S_1\cap\mathcal V_{x_1}^N| \geq r$, since $1\leq|S_1\cap\mathcal V_{x_1}^N|\leq r-1$. Thus, either $S_1$ or $S_2$ is $r$-reachable even when $|S_1\cap \mathcal V_i^N|=r-1$. Thus, whether $S_1$ contains a node in $K$ or not, either $S_1$ or $S_2$ is always $r$-reachable.
\end{proof}

Proposition~\ref{prop:2r-1} and~\ref{prop:2r} show that the bounds presented at Theorem~\ref{thm:min_addition} and~\ref{thm:min_addition2} are tight. Note that the construction mechanisms shown at Proposition~\ref{prop:2r-1} and \ref{prop:2r} are not the only ways to construct $(2r-1,r)$-robust and $(2r,r)$-robust graphs. Also, note $\lceil{\frac {2r} 2}\rceil = \lceil{\frac {2r-1} 2}\rceil=r$. Thus, these graphs have the maximum $r$-robustness as well as the least number of edges.

 \begin{figure}
\includegraphics[width=0.9\columnwidth]{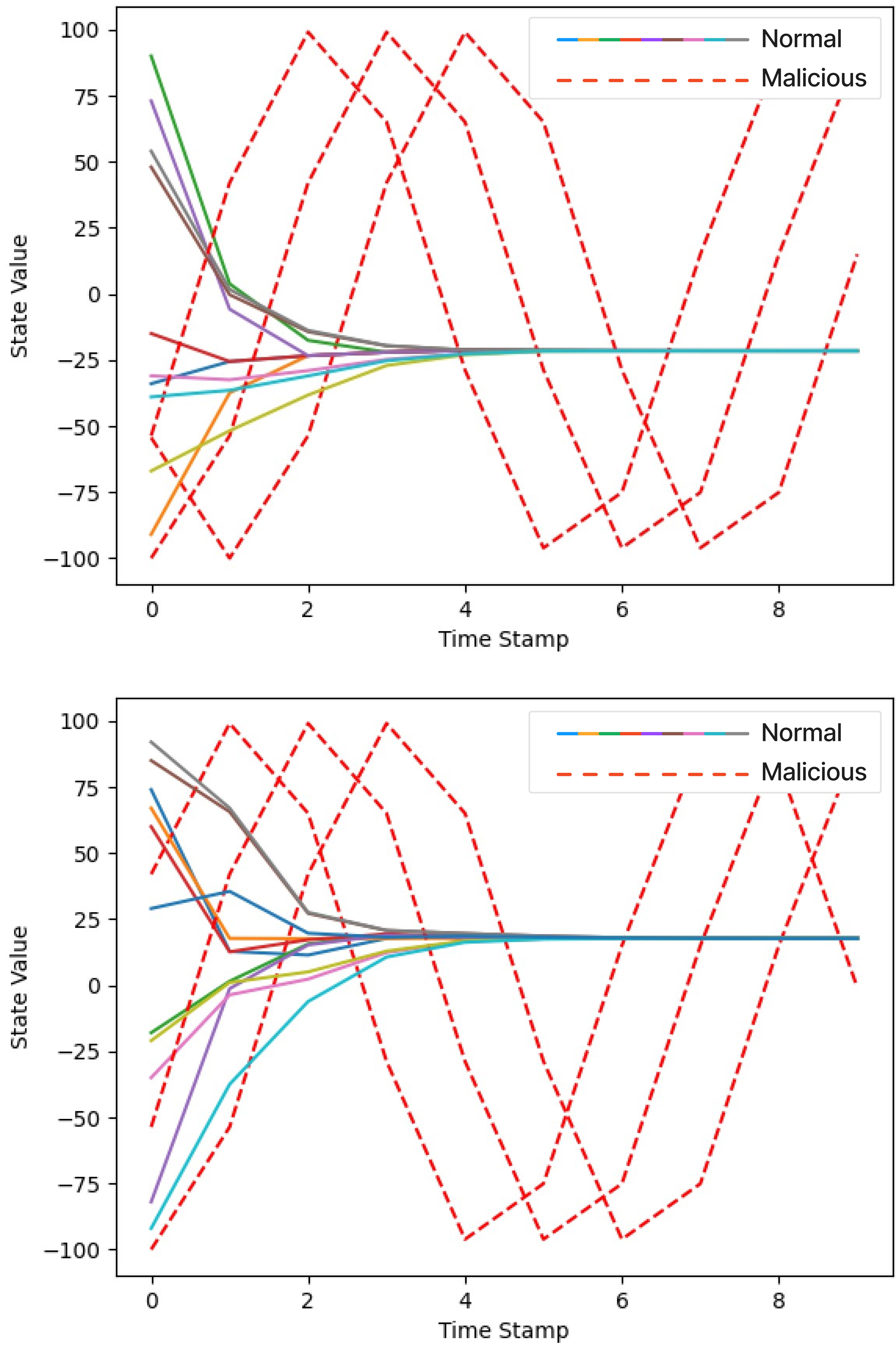}
    \caption{Simulations on the normal agents in the $(13,7)$-robust (top) and $(14,7)$-robust (bottom) graphs performing resilient consensus with $3$-local malicious agents through the W-MSR algorithm. The malicious agents' states are plotted in red dotted lines, while the normal agents' states are plotted in solid colored lines.}
    \label{fig:consensus1}
    \vspace{-4mm}
\end{figure}

\section{Simulations}
\label{sec:sim}
In previous section, we showed that $(2r-1,r)$-robust and $(2r,r)$-robust graphs have the least number of edges to reach maximum robustness. In this section, we present simulations to demonstrate their (a) maximum robustness and (b) properties of having the least number of edges among the $r$-robust graphs with the same number of nodes.

 In the first set of simulations, we have $13$ and $14$ agents in two separate scenarios. Then, the maximum robustness they can achieve is $\lceil{\frac {13} 2}\rceil=\lceil{\frac {14} 2}\rceil=7$. Thus by using procedures from Proposition~\ref{prop:2r-1} and~\ref{prop:2r}, the agents form $7$-robust graphs by constructing  $(13,7)$-robust and $(14,7)$-robust graphs. 

We choose malicious agents \cite{LeBlanc13} as our misbehaving agents' threat model. Malicious agents do not follow the nominal update protocol \eqref{eq:linear} but share the same values of their states with all of their neighbors at time $t$ \cite{LeBlanc13}. It is established in \cite{LeBlanc13} that normal agents in a $(2F+1)$-robust network are guaranteed to reach asymptotic consensus through the W-MSR algorithm in the presence of $F$-local malicious agents. Therefore by Proposition~\ref{prop:2r-1} and~\ref{prop:2r}, the normal agents in the $(13,7)$-robust and $(14,7)$-robust graphs are guaranteed to achieve consensus through the W-MSR algorithm with $3$-local malicious agents. The normal agents' initial states $x[0] \in \mathbb R$ are randomly generated on the interval $[-100,100]$. Fig.~\ref{fig:consensus1} shows successful consensus of the normal agents in $(13,7)$-robust and $(14,7)$-robust graphs. Their states are plotted in solid colored lines, while the malicious agents' states are plotted in red dotted lines.

In the second set of simulations, we have randomly generated $r$-robust graphs to empirically demonstrate that the bounds given in Theorem~\ref{thm:min_addition} and~\ref{thm:min_addition2} hold for any $r$. If they hold for any $r$, that implies $(2r-1,r)$-robust and $(2r,r)$-robust graphs have the least number of edges possible to maintain $r$-robustness among graphs of $2r-1$ and $2r$ nodes. 

 For each $r\in \{1,2,\cdots, 20\}$, we randomly generated $200$ Erdös–Rényi random graphs \cite{erdos1964,bollobas2001}. The graph version we adopted has two parameters - number of nodes, $n$, and the probability, $p$, of which an edge between two nodes is formed independent of other edges. In this simulation, we used $n\in \{2r-1,2r\}$ and $p\in\{0.7,0.75,0.8,0.85,0.9\}$. For each value of $r$, we randomly generated graphs until we had $200$ $r$-robust graphs of $n$ nodes ($50$ for each value of $p$), found the minimum number of edges, and compared them with the bounds. We computed the graphs' robustness with the method developed in \cite{usevitch2020}. Fig.~\ref{fig:sparse} compares the minimum number of edges among the $200$ graphs and the bounds presented in Theorem~\ref{thm:min_addition} and~\ref{thm:min_addition2} for each robustness. The green and red dotted lines represent the minimum number of edges among $200$ Erdös–Rényi random graphs with $2r-1$ and $2r$ nodes respectively. The blue and cyan solid lines represent the lower bounds from Theorem~\ref{thm:min_addition} and~\ref{thm:min_addition2} respectively. Note that the gaps between the bounds and found minimums increase as $r$ increases. This occurs because as $r$ increases, the number of different possible graphs grows exponentially, and thus it gets more difficult to generate graphs with lesser edges within the fixed number of $200$ generations.

  \begin{figure}
\centering
\includegraphics[width=0.98\columnwidth]{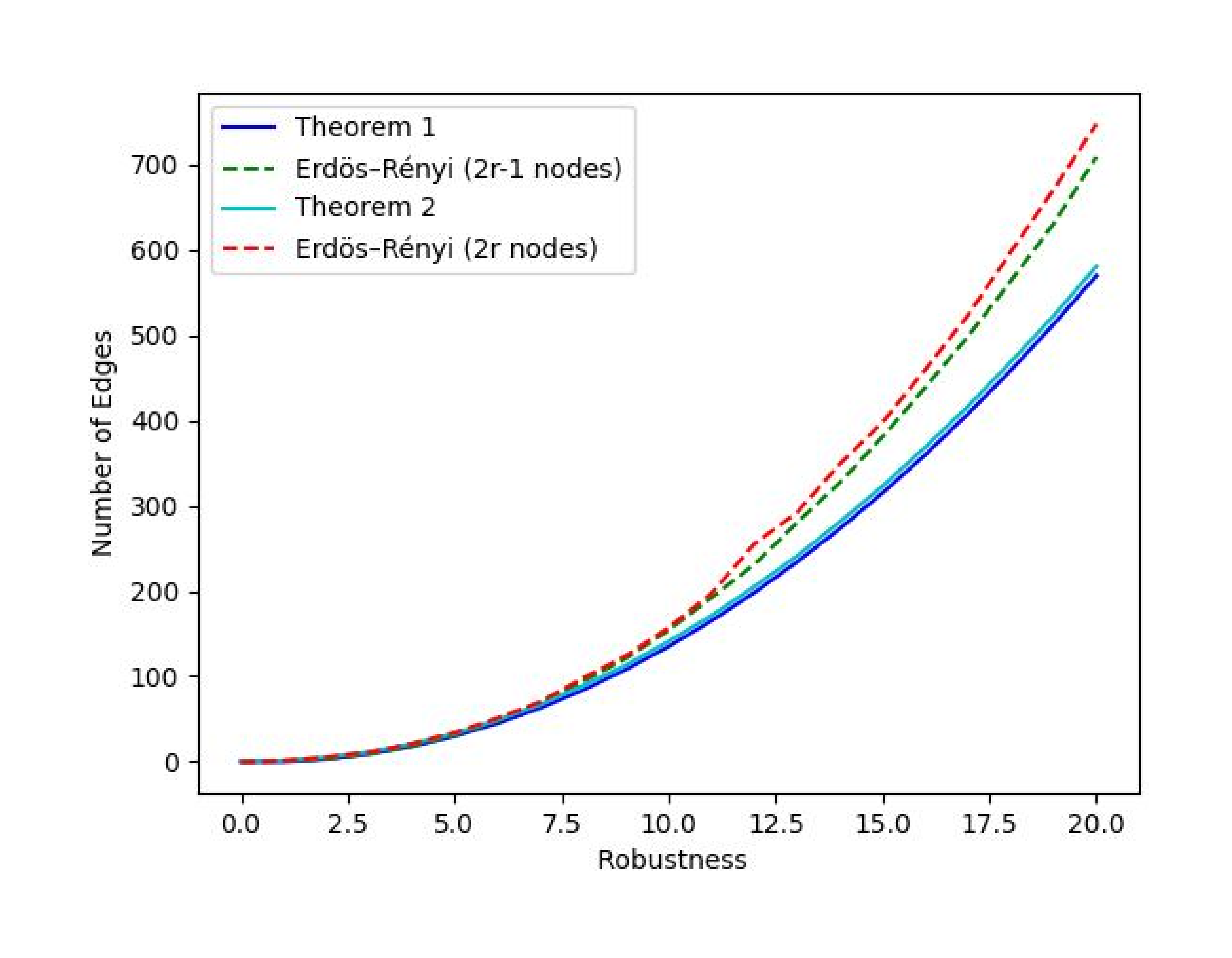}
    \caption{A comparison between the minimum numbers of edges among $200$ Erdös–Rényi random graphs of $2r-1$ and $2r$ nodes, and the lower bounds given at Theorem~\ref{thm:min_addition} and~\ref{thm:min_addition2} for each $r$.}
    \label{fig:sparse}
     \vspace{-4mm}
\end{figure}

\section{Conclusion}
\label{sec:concl}
In this paper, we study the properties of $r$-robust graphs in two ways. (a) We establish the necessary structures and tight lower bounds of number of edges for $r$-robust graphs with the maximum robustness. (b) Then we introduce $(2r-1,r)$-robust and $(2r,r)$-robust graphs, which are subclasses of a more general graph we call $(n,r)$-robust graphs, and prove that they exhibit the least number of edges to reach the maximum robustness with a given number of nodes. Finally, we empirically verify their robustness and sparsity properties through simulations. For future work, we aim to expand our work to $(n,r)$-robust graphs with $n>2r$, and study control synthesis that minimizes the loss of robustness as agents navigate challenging environments.

\bibliographystyle{IEEEtran}
\bibliography{references_ll}
\end{document}